\documentclass[a4paper,aps,prd,10pt,preprintnumbers,showpacs,showkeys,onecolumn,superscriptaddress,nofootinbib,amsmath,amssymb]{revtex4-2}
\usepackage{amsthm,bm,mathtools}
\usepackage{tabularray}
\UseTblrLibrary{booktabs}
\usepackage[hidelinks]{hyperref}
\newtheorem{proposition}{Proposition}
\newtheorem{theorem}{Theorem}

\newcommand{\dd}{\mathrm{d}}
\newcommand{\ii}{\mathrm{i}}
\newcommand{\cU}{\cos(aU)}
\newcommand{\sU}{\sin(aU)}

\begin{document}

\title{Homogeneous Spectral Freedom and Hadamard Consistency\\
of Perturbative Green Functions}
\author{Bekir Can L\"utf\"uo\u{g}lu}
\altaffiliation{ORCID: \href{https://orcid.org/0000-0001-6467-5005}{0000-0001-6467-5005}}
\email{bekir.lutfuoglu@uhk.cz}
\affiliation{Department of Physics, Faculty of Science, University of Hradec Kr\'alov\'e, Rokitansk\'eho 62/26, 500 03 Hradec Kr\'alov\'e, Czech Republic}

\begin{abstract}
Perturbative mode equations often admit formally different particular solutions whose spectral sums appear to have different short-distance singularities.  We formulate a general diagnostic which separates this representational freedom from a genuine change of quantum state: for a fixed normally hyperbolic operator, two Hadamard two-point functions can differ only by a smooth homogeneous bisolution.  Thus, a nonsmooth term generated by rearranging a perturbative source must be traced either to a changed operator or to inadmissible homogeneous spectral data.  We apply the criterion to two impulsive plane-wave backgrounds.  Starting directly from the metrics, we recompute the wave operators, modes, world functions, and Van Vleck determinants.  Exact-mode matching and the geometrical construction agree through second order for the first profile.  In the second, correcting the metric changes the leading perturbation from quadratic to linear order in the wave parameter and eliminates inverse transverse-momentum poles, while the corrected metric retains an exact aligned null-radiation Einstein source.  In both examples the singular Green function is precisely the perturbative expansion of the Hadamard parametrix; the additional logarithms arise from momentum-singular homogeneous data.  For the exactly soluble first profile, the standard flat-normalized plane-wave prescription gives a vanishing renormalized stress tensor away from the impulsive surface.  Smooth-front regularizations of both profiles reproduce, at the calculated orders and away from the characteristic zero mode, the singular structure of the matched impulsive propagators without generating inverse transverse-momentum poles.
\end{abstract}
\keywords{Hadamard states, Green functions, impulsive plane waves, quantum field theory in curved spacetime, distributional geometry, renormalized stress tensor}
\maketitle

\section{Introduction} \label{sec:introduction}

The short-distance singularity of a quantum two-point function in curved spacetime is highly constrained.  For a physically admissible Hadamard state, the singular coefficients are determined locally by the metric and the field equation; different Hadamard states can differ only by a smooth bisolution \cite{Fulling1978,KayWald1991,Radzikowski1996,Brunetti1996,HollandsWald2001}. This statement underlies covariant point-splitting renormalization \cite{DeWittBrehme1960,Christensen1976,DecaniniFolacci2006} and makes the singularity structure itself a sensitive test of perturbative calculations.  If two constructions on the same background produce inequivalent singular terms, they cannot both describe Hadamard states with the same boundary or matching prescription.

Plane-wave spacetimes provide an unusually sharp setting for this test.  They are nontrivial gravitational backgrounds for which propagation is often tractable, while their null symmetry strongly restricts quantum effects. For plane sandwich waves, Gibbons showed that the time-dependent geometry does not produce particles and that the renormalized stress tensor vanishes under the standard vacuum prescription \cite{Gibbons1975,Deser1975}. Related analyses of zero-mass waves in curved gravitational backgrounds appear in Ref.~\cite{Chrzanowski1976}.  The propagation of massless fields is also closely tied to the special tail structure of plane waves \cite{Harte2013}.  These facts do not make every intermediate Green function trivial, but they provide stringent consistency checks on its state dependence and ultraviolet form.

Impulsive waves add a separate subtlety.  Their curvature is concentrated on a null hypersurface, so products of distributions and operations performed directly at the wavefront require care.  Smooth-profile regularizations and low-regularity geometric methods provide a controlled interpretation of this limit \cite{Steinbauer1998,KunzingerSteinbauer1999,SteinbauerVickers2006, PodolskySteinbauer2022,Samann2024}.  Exact impulsive solutions on constant-curvature backgrounds provide a complementary geometrical framework \cite{PodolskyGriffiths1999}.  Away from the wavefront, however, the metric is smooth and the ordinary local Hadamard analysis applies without modification.  It is therefore essential not to confuse a distributional matching question at the impulse with an additional short-distance singularity in a smooth region. Recent exact constructions of Wightman functions on impulsive plane-wave backgrounds likewise organize their local singularities through the world function and Van Vleck determinant and show that separate changes occur at conjugate planes \cite{Cho2023}.  Our analysis is deliberately restricted to normal neighbourhoods before such caustics.

The 1999 calculation of Horta\c{c}su and L\"utf\"uo\u{g}lu investigated two members of Hogan's impulsive plane-wave family and observed that different splittings of an inhomogeneous perturbative mode equation could generate or remove logarithmic terms \cite{HortacsuLutfuoglu1999,Hogan1999}.  The work correctly identified these terms as spurious at the level of the final renormalized stress tensor.  It did not, however, fully separate the choice of a particular solution from the homogeneous spectral data, and the second example inherited inconsistencies already present in the specialized metric. This leaves a sharper question: are the logarithms produced by the geometry, or do they represent a change of state hidden inside the perturbative solution?

We answer this question by rebuilding both examples from the metric.  For the first profile we compare exact modes, characteristic solutions, and the world-function construction through second order.  For the second we first correct the metric specialization and then derive its leading mode and Green function corrections.  We show that source redistribution alone cannot alter the field: the difference resides in freely chosen homogeneous functions. Exact incoming-mode matching and Hadamard admissibility fix that freedom and leave only the geometrical singularity.  We then examine the stress tensor, state the result as Proposition~\ref{prop:main}, and give an equation-by-equation comparison with the earlier calculation.

The paper is organized as follows.  Section~\ref{sec:general-diagnostic} formulates the general Hadamard diagnostic. Sections~\ref{sec:first-profile} and \ref{sec:first-green-functions} develop the exact and perturbative mode analysis of the first profile and compare its matched Green function with the geometrical Hadamard expansion.  Section~\ref{sec:impulsive-limit} treats matching across the impulse and the smooth-front limit, while Sec.~\ref{sec:stress-tensor} discusses Hadamard subtraction and the renormalized stress tensor.  The corrected second profile is analysed in Sec.~\ref{sec:second-profile}.  Section~\ref{sec:principal-discussion} collects the principal result, its implications, and the conclusions.  The equation-by-equation audit of the 1999 calculation is provided in Appendix~\ref{sec:comparison}.

\section{A general Hadamard diagnostic} \label{sec:general-diagnostic}

We consider a real massless scalar field with curvature coupling $\xi$, whose smooth-background field operator is
\begin{equation}
 P=\Box+\xi R_{\rm sc}.  \label{eq:scalar-field-operator}
\end{equation}
Both plane-wave profiles studied below have $R_{\rm sc}=0$, so their mode equation is $\Box\phi=0$ for every $\xi$; the coupling remains relevant only to the point-split stress-tensor operator.  We use signature $(+---)$ and define the Feynman two-point function by
\begin{equation}
 G_F(X,X')=\langle T\phi(X)\phi(X')\rangle, \qquad  P_XG_F(X,X')=-\ii\,\delta_g(X,X'). \label{eq:Feynman-convention}
\end{equation}
With $D_0=2\sigma_0$, where $\sigma_0$ is the flat-space world function, the flat massless kernel is therefore
\begin{equation}
 G_F^{(0)}(X,X')=-\frac{1}{4\pi^2}\frac{1}{D_0-\ii0}, \qquad  \mathcal N\equiv-\frac{1}{4\pi^2}. \label{eq:Feynman-normalization}
\end{equation}
Here and below $-\ii0$ is shorthand for the usual covariant Feynman boundary value in a convex normal neighbourhood.

Let $P_\epsilon$ be a smooth one-parameter family of normally hyperbolic operators on a globally hyperbolic region $\mathcal O$, with formal expansion
\begin{equation}
 P_\epsilon=P_0+\epsilon P_1+\epsilon^2P_2+\cdots .  \label{eq:general-operator-expansion}
\end{equation}
Suppose that a perturbative two-point function is constructed either from modes or directly from a Green-function equation.  At order $n$, the inhomogeneous equation fixes a particular solution but permits addition of a bisolution of $P_0$.  In a mode representation this freedom appears as arbitrary functions of characteristic invariants or as unfixed spectral coefficients.  The smooth-difference theorem for Hadamard states is standard; the point recorded below is its order-by-order consequence for perturbative source splittings.  The new content of this work lies in turning that consequence into an operator--matching--parametrix diagnostic and applying it to the two plane-wave problems.

\begin{theorem}[Hadamard diagnostic for perturbative splittings] \label{thm:general-diagnostic}
Let $G_\epsilon^+$ and $\widetilde G_\epsilon^+$ be perturbative Wightman two-point functions for the same operator $P_\epsilon$ on $\mathcal O$, constructed through order $N$, and suppose that they obey the same canonical commutator.  Assume also that their Hadamard remainders admit $C^N$ expansions in $\epsilon$ with coefficients in $C^\infty(\mathcal O\times\mathcal O)$.  Their first nonvanishing difference at order $n\leq N$ is a homogeneous bisolution of $P_0$.  If both constructions are Hadamard, this difference is smooth.  The analogous statement holds for Feynman kernels provided that both solve the same delta-source equation and implement the same boundary prescription.  Hence a source splitting which produces a new nonsmooth term without changing $P_\epsilon$ necessarily changes the homogeneous state data and cannot represent a second Hadamard realization of the same matched state.
\end{theorem}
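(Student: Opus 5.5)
The plan is to expand both constructions in $\epsilon$ and compare them order by order. Write $G^+_\epsilon=\sum_{k\le N}\epsilon^k G^+_k+o(\epsilon^N)$, and likewise for $\widetilde G^+_\epsilon$. Set $\Delta_\epsilon=G^+_\epsilon-\widetilde G^+_\epsilon$ and $\Delta_k=G^+_k-\widetilde G^+_k$. Let $n$ be the smallest index with $\Delta_n\neq0$, so that $\Delta_k=0$ for $k<n$. Both kernels are Wightman functions for $P_\epsilon$, so $P_{\epsilon,X}\Delta_\epsilon=0$ and $P_{\epsilon,X'}\Delta_\epsilon=0$ as distributions. Expanding with \eqref{eq:general-operator-expansion} and collecting the coefficient of $\epsilon^m$ gives
\begin{equation}
 \sum_{j=0}^{m}P_{j,X}\Delta_{m-j}=0 .
\end{equation}
At $m=n$ every term with $j\ge1$ contains some $\Delta_{n-j}$ with $n-j<n$, which vanishes. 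Hence $P_{0,X}\Delta_n=0=P_{0,X'}\Delta_n$, so $\Delta_n$ is a bisolution of $P_0$.

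The common canonical commutator states that $G^+_\epsilon-G^+_\epsilon{}'$ (the kernel minus its flip $X\leftrightarrow X'$) equals $\ii E_\epsilon$, the Pauli--Jordan function of $P_\epsilon$, and the same holds for $\widetilde G^+_\epsilon$. Therefore the antisymmetric part of $\Delta_n$ vanishes at every order, and $\Delta_n$ is a real-symmetric homogeneous bisolution. This is the freedom of choosing homogeneous state data that the rest of the paper identifies with unfixed spectral coefficients. For the Feynman case the argument is the same: both kernels obey $P_\epsilon G_F=-\ii\delta_g$, so the delta sources cancel order by order. The shared boundary prescription then forces the difference to be a bisolution rather than an advanced-minus-retarded combination.

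For smoothness, assume both constructions are Hadamard for $P_\epsilon$. Then each can be written as
\begin{equation}
 G^+_\epsilon=H_\epsilon+W_\epsilon ,
\end{equation}
where $H_\epsilon$ is the local Hadamard parametrix truncated at high order and $W_\epsilon$ is the remainder. The parametrix coefficients $\Delta^{1/2}$, $V$ and $\sigma$ are built locally from $P_\epsilon$ and the metric, so they are identical for the two constructions. The singular parts therefore cancel, leaving $\Delta_\epsilon=W_\epsilon-\widetilde W_\epsilon$. This is the step needing the extra hypothesis, and the point where the argument could fail. A limit of smooth functions need not be smooth, and one must not differentiate the singular part $H_\epsilon$ in $\epsilon$ inside distributions. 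The assumption that the remainders admit $C^N$ expansions with coefficients in $C^\infty(\mathcal O\times\mathcal O)$ is exactly what is needed here. With it, the $n$th Taylor coefficient $\Delta_n=W_n-\widetilde W_n$ is smooth.

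I would also remark that the parametrix must be truncated at an order high enough that its error is $C^k$ for the relevant $k$. Taking $k$ arbitrary then gives $C^\infty$, since the remainder is independent of truncation modulo smooth terms.

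The concluding sentence of the theorem is the contrapositive. Suppose a source splitting leaves $P_\epsilon$ unchanged but produces a nonsmooth term in $\Delta_n$, for example a $\log$ of $\sigma$ not accounted for by the common $H_\epsilon$. Then $\Delta_n$ is a non-smooth bisolution. By the smoothness argument above, at least one of the two constructions is non-Hadamard, or the two were built from different homogeneous data. In either case the splitting cannot be a second Hadamard realization of the same matched state.
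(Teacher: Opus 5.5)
Your proposal is correct and follows essentially the same route as the paper: order-by-order subtraction gives a $P_0$-bisolution at the first nonvanishing order; the common commutator (or common delta source and boundary prescription for Feynman kernels) removes the fundamental-solution part; and smoothness follows from cancelling the common, locally determined Hadamard singular part, with the assumed $C^N$ regularity of the remainders letting the conclusion be taken coefficient by coefficient. You write out the order-$n$ recursion and the parametrix/remainder bookkeeping explicitly, whereas the paper cites the standard smooth-difference theorem and simply invokes the regularity hypothesis.
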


\begin{proof}
Write
\begin{equation}
 G_\epsilon^+=\sum_{j=0}^{N}\epsilon^jG_j^++O(\epsilon^{N+1}), \qquad
 \widetilde G_\epsilon^+=\sum_{j=0}^{N}\epsilon^j\widetilde G_j^+ +O(\epsilon^{N+1}). \label{eq:general-G-expansion}
\end{equation}
If the coefficients agree below order $n$, subtraction of the two order-$n$ Green-function equations gives
\begin{equation}
 P_0(G_n^+-\widetilde G_n^+)=0  \label{eq:general-homogeneous-difference}
\end{equation}
in each argument away from the diagonal; equality of the commutators removes the antisymmetric fundamental-solution part.  For Feynman kernels, the same subtraction instead cancels the common delta source, while equality of the boundary prescription fixes the fundamental-solution part.  The local Hadamard coefficients are fixed by $P_\epsilon$ and the geometry.  The difference of any two Hadamard two-point functions for that operator is therefore a smooth bisolution \cite{Fulling1978,KayWald1991,Radzikowski1996,Brunetti1996}. The assumed regularity in $\epsilon$ permits this conclusion to be taken coefficient by coefficient. A nonsmooth difference cannot be attributed to an alternative choice of particular solution within the same Hadamard state.  It must enter through different homogeneous data (or, if the operators were not in fact equal, through a changed field equation).
\end{proof}

For the applications below, the operator is first derived from the metric, the complete homogeneous solution is retained when the source is rearranged, and the remaining freedom is fixed by common matching data.  The resulting singularity can then be compared with the geometrical Hadamard parametrix. The first profile admits an exact mode solution.  In the second, the same comparison is made after correcting the background metric.

\section{First profile: geometry and mode solutions} \label{sec:first-profile}

\subsection{Geometry of the plane-wave background} \label{sec:first-geometry}

We begin with the impulsive gravitational plane-wave metric considered by Hogan \cite{Hogan1999},
\begin{equation}
 \dd s^2=-2\cos^2(aU)  \left|\dd z+\Theta(U)\frac{\tan(aU)}{a}\,  \overline{H}(\bar z)\dd\bar z\right|^2+2\dd U\dd V, \label{eq:complex-metric}
\end{equation}
where
\begin{equation}
 z=\frac{x+\ii y}{\sqrt{2}}, \qquad  \bar z=\frac{x-\ii y}{\sqrt{2}}.  \label{eq:complex-coordinates}
\end{equation}
For the first profile we take $G(z)=az$, so that $H(z)=a$.  In the $U>0$ region, Eq.~\eqref{eq:complex-metric} becomes
\begin{align}
 \dd s^2={}&2\dd U\dd V -\bigl[\cos(aU)+\sin(aU)\bigr]^2\dd x^2 -\bigl[\cos(aU)-\sin(aU)\bigr]^2\dd y^2. \label{eq:real-metric}
\end{align}
Thus, in coordinates $X^\mu=(U,V,x,y)$, the nonzero covariant components are
\begin{equation}
 g_{UV}=g_{VU}=1,\qquad g_{xx}=-(\cU+\sU)^2, \qquad  g_{yy}=-(\cU-\sU)^2.  \label{eq:metric-components}
\end{equation}
The inverse components are
\begin{equation}
 g^{UV}=g^{VU}=1,\qquad  g^{xx}=-\frac{1}{(\cU+\sU)^2},\qquad  g^{yy}=-\frac{1}{(\cU-\sU)^2},  \label{eq:inverse-components}
\end{equation}
and
\begin{equation}
 \sqrt{|g|}=\cos(2aU).  \label{eq:determinant}
\end{equation}
With the curvature convention
$R^\rho{}_{\sigma\mu\nu}
=\partial_\mu\Gamma^\rho_{\nu\sigma}
-\partial_\nu\Gamma^\rho_{\mu\sigma}
+\Gamma^\rho_{\mu\lambda}\Gamma^\lambda_{\nu\sigma}
-\Gamma^\rho_{\nu\lambda}\Gamma^\lambda_{\mu\sigma}$,
the only nonzero Ricci component in the post-front region is
\begin{equation}
 R_{UU}=2a^2, \qquad R_{\rm sc}=0.
 \label{eq:first-exact-Ricci}
\end{equation}
Thus the first profile carries an aligned null-radiation source.
Within a coordinate patch in which $\cos(2aU)>0$, the scalar wave operator
\begin{equation}
 \Box\Phi=\frac{1}{\sqrt{|g|}}\partial_\mu  \left(\sqrt{|g|}g^{\mu\nu}\partial_\nu\Phi\right),  \label{eq:box-definition}
\end{equation}
takes the form
\begin{align}
 \Box={}&2\partial_U\partial_V-2a\tan(2aU)\partial_V
 +\frac{\sin(2aU)-1}{\cos^2(2aU)}\partial_x^2 -\frac{\sin(2aU)+1}{\cos^2(2aU)}\partial_y^2.
 \label{eq:exact-box}
\end{align}
In particular,
\begin{equation}
 \Box=L_0+aL_1+O(a^2),
 \label{eq:operator-expansion}
\end{equation}
where
\begin{equation}
 L_0=2\partial_U\partial_V-\partial_x^2-\partial_y^2,
 \qquad
 L_1=2U\left(\partial_x^2-\partial_y^2\right).
 \label{eq:L0L1}
\end{equation}

\subsection{Exact separated modes}
\label{sec:exact-modes}

Consider a massless scalar mode
\begin{equation}
 \Phi=N(U)\exp\left[\ii RV+\ii k_1x+\ii k_2y+\ii Q(U)\right].
 \label{eq:mode-ansatz}
\end{equation}
Substitution into $\Box\Phi=0$ shows that the first-derivative term is
removed by
\begin{equation}
 N(U)=\frac{1}{\sqrt{\cos(2aU)}}.
 \label{eq:normalization}
\end{equation}
The remaining phase equation is
\begin{equation}
 Q'(U)=\frac{k_1^2+k_2^2-(k_1^2-k_2^2)\sin(2aU)}
 {2R\cos^2(2aU)},
 \label{eq:phase-equation}
\end{equation}
and hence
\begin{equation}
 Q(U)=\frac{1}{4aR}\left[
 (k_1^2+k_2^2)\tan(2aU)
 -\frac{k_1^2-k_2^2}{\cos(2aU)}\right]+Q_0.
 \label{eq:exact-phase}
\end{equation}
Equivalently, after absorbing $Q_0$ into the overall normalization,
\begin{align}
 \Phi={}&\frac{e^{\ii RV+\ii k_1x+\ii k_2y}}
 {\sqrt{\cos(2aU)}}\exp\left[
 \frac{k_1^2-k_2^2}{4\ii Ra\cos(2aU)}
 -\frac{k_1^2+k_2^2}{4\ii Ra}\tan(2aU)
 \right].
 \label{eq:exact-mode}
\end{align}

\subsection{Perturbative mode equation}
\label{sec:perturbative-mode-equation}

We now consider the off-shell eigenfunctions $\phi$ used in the resolvent representation, distinguishing them from the exact on-shell modes $\Phi$ above. Let
\begin{equation}
 \phi=\phi^{(0)}+a\phi^{(1)}+O(a^2),\qquad
 \lambda=\lambda^{(0)}+a\lambda^{(1)}+O(a^2),
 \label{eq:eigen-expansion}
\end{equation}
with
\begin{equation}
 \phi^{(0)}=\frac{1}{(2\pi)^2\sqrt{2|R|}}
 e^{\ii[(K/2R)U+RV+k_1x+k_2y]},
 \qquad
 \lambda^{(0)}=K-k_1^2-k_2^2.
 \label{eq:zeroth-mode}
\end{equation}
Here the four independent spectral labels $(K,R,k_1,k_2)$ belong to the
off-shell eigenfunction representation used in the resolvent.  Its
$(2\pi)^{-2}$ convention should not be confused with the three-label
Klein--Gordon normalization of the on-shell modes introduced below.
For $\lambda^{(1)}=0$, the first-order equation is
\begin{equation}
 \left(L_0-\lambda^{(0)}\right)\phi^{(1)}+L_1\phi^{(0)}=0.
 \label{eq:first-order-master}
\end{equation}
Using
\begin{equation}
 \phi^{(1)}=\phi^{(0)}\left(g(U,z)+h(U,\bar z)\right),
 \label{eq:first-order-ansatz}
\end{equation}
we obtain, after division by $2\phi^{(0)}$,
\begin{align}
 &\ii R(\partial_Ug+\partial_Uh)
 -\frac{\ii}{\sqrt2}(k_1+\ii k_2)\partial_zg
 -\frac{\ii}{\sqrt2}(k_1-\ii k_2)\partial_{\bar z}h -(k_1^2-k_2^2)U=0.
 \label{eq:correct-first-order-PDE}
\end{align}
Equation~\eqref{eq:correct-first-order-PDE}, rather than a chosen split of
it, will be the starting point for classifying admissible particular and
homogeneous solutions.

\subsection{Characteristic solution and homogeneous freedom}
\label{sec:characteristic-solution}

It is useful to introduce
\begin{equation}
 q=k_1+\ii k_2,\qquad \bar q=k_1-\ii k_2,
 \qquad d=k_1^2-k_2^2.
 \label{eq:q-d-definitions}
\end{equation}
The source in Eq.~\eqref{eq:correct-first-order-PDE} may be distributed
between the holomorphic and antiholomorphic equations by means of an
otherwise arbitrary function $J(U)$:
\begin{align}
 \ii R\partial_Ug-\frac{\ii q}{\sqrt2}\partial_zg&=J(U),
 \label{eq:g-split-general}\\
 \ii R\partial_Uh-\frac{\ii\bar q}{\sqrt2}\partial_{\bar z}h
 &=dU-J(U).
 \label{eq:h-split-general}
\end{align}
Adding Eqs.~\eqref{eq:g-split-general} and
\eqref{eq:h-split-general} recovers the same master equation for every
choice of $J(U)$.

The characteristic invariants are
\begin{equation}
 \zeta=z+\frac{q}{\sqrt2R}U,
 \qquad
 \bar\zeta=\bar z+\frac{\bar q}{\sqrt2R}U.
 \label{eq:characteristic-invariants}
\end{equation}
Consequently, the general separated solution is
\begin{align}
 g(U,z)&=F(\zeta)-\frac{\ii}{R}\int^U J(s)\,\dd s,
 \label{eq:g-general}\\
 h(U,\bar z)&=H(\bar\zeta)
 -\frac{\ii}{R}\int^U[d\,s-J(s)]\,\dd s,
 \label{eq:h-general}
\end{align}
where $F$ and $H$ are arbitrary one-variable functions.  Their sum is
therefore
\begin{equation}
 g+h=F(\zeta)+H(\bar\zeta)-\frac{\ii d}{2R}U^2+C.
 \label{eq:sum-general}
\end{equation}
The result is independent of the function $J(U)$.  In particular, for a
constant splitting parameter $\eta$ one may take
\begin{equation}
 J(U)=\eta dU,
 \label{eq:eta-splitting}
\end{equation}
which gives
\begin{align}
 g&=F(\zeta)-\frac{\ii\eta d}{2R}U^2,
 &
 h&=H(\bar\zeta)-\frac{\ii(1-\eta)d}{2R}U^2.
 \label{eq:eta-solutions}
\end{align}
Thus a genuine redistribution of the same inhomogeneous source cannot by
itself modify the perturbative mode.  Any difference between two such
solutions is contained entirely in the homogeneous functions $F$ and
$H$ (and in an irrelevant constant).

\subsubsection{Matching to the exact mode}
\label{sec:exact-mode-matching}

For an on-shell zeroth-order mode, $K=k_1^2+k_2^2$.  Expanding the exact
phase in Eq.~\eqref{eq:exact-phase} at small $a$ gives
\begin{equation}
 Q(U)=Q_{\mathrm{const}}(a)
 +\frac{k_1^2+k_2^2}{2R}U
 -a\frac{k_1^2-k_2^2}{2R}U^2+O(a^2),
 \label{eq:exact-phase-expansion}
\end{equation}
where the $U$-independent, singular phase $Q_{\mathrm{const}}(a)$ can be
absorbed into the mode normalization.  Moreover,
$[\cos(2aU)]^{-1/2}=1+O(a^2)$.  The exact solution therefore fixes
\begin{equation}
 \frac{\phi^{(1)}}{\phi^{(0)}}
 =-\frac{\ii(k_1^2-k_2^2)}{2R}U^2,
 \label{eq:exact-first-order-ratio}
\end{equation}
in precise agreement with the particular part of
Eq.~\eqref{eq:sum-general}.  Matching the perturbative basis to the exact
basis fixes the nonconstant homogeneous contribution to
\begin{equation}
 F(\zeta)+H(\bar\zeta)=0
 \label{eq:exact-matching-condition}
\end{equation}
up to a mode-normalization constant.

This calculation identifies the origin of possible additional
singularities.  They are not generated by redistributing the same source
between two valid equations: that redistribution cancels identically in
$g+h$.  They can arise only if extra mode-dependent homogeneous functions
are admitted.  Such terms change the spectral basis (or, equivalently,
the boundary/state prescription) and must be fixed before constructing a
Green function.

\section{First-profile Green functions and Hadamard structure}
\label{sec:first-green-functions}

\subsection{Matched Green function through first order}
\label{sec:matched-green-first-order}

We first consider two points $X=(U,V,x,y)$ and
$X'=(U',V',x',y')$ lying in the $U>0$ region.  The metric
\eqref{eq:real-metric} is of Rosen plane-wave form,
\begin{equation}
 \dd s^2=2\dd U\dd V-A_x^2(U)\dd x^2-A_y^2(U)\dd y^2,
 \label{eq:rosen-form}
\end{equation}
where
\begin{equation}
 A_x(U)=\cos(aU)+\sin(aU),\qquad
 A_y(U)=\cos(aU)-\sin(aU).
 \label{eq:rosen-scale-factors}
\end{equation}
Define
\begin{equation}
 I_i(U,U')=\int_{U'}^U\frac{\dd s}{A_i^2(s)},
 \qquad i=x,y.
 \label{eq:Ii-definition}
\end{equation}
Solving the transverse geodesic equations and evaluating the geodesic
action gives Synge's world function; the associated local bitensor and
Van Vleck machinery is reviewed in
Refs.~\cite{DeWittBrehme1960,PoissonPoundVega2011}:
\begin{equation}
 \sigma(X,X')=(U-U')(V-V')
 -\frac{U-U'}{2}\left[
 \frac{(x-x')^2}{I_x(U,U')}
 +\frac{(y-y')^2}{I_y(U,U')}
 \right].
 \label{eq:world-function-rosen}
\end{equation}
Because the transverse metric carries a minus sign, it is convenient in
Eq.~\eqref{eq:world-function-rosen} to regard $I_i$ as the positive
integrals defined in Eq.~\eqref{eq:Ii-definition}; the displayed minus
sign then reproduces the flat-space interval.

For the present scale factors the integrals are elementary:
\begin{align}
 I_x(U,U')&=\frac{1}{a}\left[
 \frac{1}{1+\tan(aU')}-\frac{1}{1+\tan(aU)}\right],
 \label{eq:Ix-exact}\\
 I_y(U,U')&=\frac{1}{a}\left[
 \frac{1}{1-\tan(aU)}-\frac{1}{1-\tan(aU')}\right].
 \label{eq:Iy-exact}
\end{align}
Their small-$a$ expansions are
\begin{align}
 I_x&=(U-U')\left[1-a(U+U')\right]+O(a^2),
 \label{eq:Ix-expansion}\\
 I_y&=(U-U')\left[1+a(U+U')\right]+O(a^2).
 \label{eq:Iy-expansion}
\end{align}
It follows that
\begin{equation}
 2\sigma=D_0-aS+O(a^2),
 \label{eq:sigma-first-order}
\end{equation}
where
\begin{align}
 D_0={}&2(U-U')(V-V')-(x-x')^2-(y-y')^2,
 \label{eq:D0-definition}\\
 S={}&(U+U')\left[(x-x')^2-(y-y')^2\right].
 \label{eq:S-definition}
\end{align}

The Ricci tensor of this background begins at order $a^2$.  Accordingly,
the local Van Vleck factor satisfies
\begin{equation}
 \Delta^{1/2}(X,X')=1+O(a^2)
 \label{eq:van-vleck-first-order}
\end{equation}
in a normal neighbourhood.  With the convention and fixed normalization in
Eq.~\eqref{eq:Feynman-normalization}, the singular part of the matched
massless Feynman Green function is therefore
\begin{align}
 G_F(X,X')&=\mathcal N\frac{1}{D_0-aS-\ii0}+O(a^2)
 \notag\\
 &=\mathcal N\left[
 \frac{1}{D_0-\ii0}
 +a\frac{S}{(D_0-\ii0)^2}
 \right]+O(a^2).
 \label{eq:GF-first-order}
\end{align}
No logarithmic enhancement is present at this order.  The correction has
the standard differentiated-Hadamard form $S/(D_0-\ii0)^2$ and is exactly
what is obtained from the matched mode correction
\eqref{eq:exact-first-order-ratio} in the spectral representation.

Equation~\eqref{eq:GF-first-order} also makes the geometrical status of
the first-order result transparent.  Since the curvature starts at
$O(a^2)$, the $O(a)$ modification is a coordinate deformation of the
flat-space singularity rather than a new local vacuum-polarization
effect.

\subsection{Second-order Hadamard structure}
\label{sec:second-order-hadamard}

The exact transverse integrals also provide the second-order world
function without solving a new differential equation.  Writing
\begin{equation}
 p=U+U',\qquad r=U^2+UU'+U'^2,
 \label{eq:p-r-definitions}
\end{equation}
we find
\begin{align}
 I_x&=(U-U')\left[1-ap+\frac{4}{3}a^2r\right]+O(a^3),
 \label{eq:Ix-second-order}\\
 I_y&=(U-U')\left[1+ap+\frac{4}{3}a^2r\right]+O(a^3).
 \label{eq:Iy-second-order}
\end{align}
The combinations entering the world function reduce to
\begin{align}
 \frac{U-U'}{I_x}&=1+ap-\frac{a^2}{3}(U-U')^2+O(a^3),
 \label{eq:inverse-Ix}\\
 \frac{U-U'}{I_y}&=1-ap-\frac{a^2}{3}(U-U')^2+O(a^3).
 \label{eq:inverse-Iy}
\end{align}
Consequently,
\begin{equation}
 2\sigma=D_0-aS+a^2T+O(a^3),
 \label{eq:sigma-second-order}
\end{equation}
where $D_0$ and $S$ are given in Eqs.~\eqref{eq:D0-definition} and
\eqref{eq:S-definition}, and
\begin{equation}
 T=\frac{(U-U')^2}{3}\left[(x-x')^2+(y-y')^2\right].
 \label{eq:T-definition}
\end{equation}

For a diagonal Rosen wave with two transverse directions, the Van Vleck
determinant can be written locally as
\begin{equation}
 \Delta(X,X')=
 \frac{(U-U')^2}
 {A_x(U)A_y(U)A_x(U')A_y(U')I_x(U,U')I_y(U,U')}.
 \label{eq:van-vleck-exact}
\end{equation}
Using $A_x(U)A_y(U)=\cos(2aU)$ and
Eqs.~\eqref{eq:Ix-second-order}--\eqref{eq:Iy-second-order}, we obtain
\begin{align}
 \Delta&=1+\frac{a^2}{3}(U-U')^2+O(a^3),
 \label{eq:van-vleck-second-order}\\
 \Delta^{1/2}&=1+\frac{a^2}{6}(U-U')^2+O(a^3).
 \label{eq:sqrt-van-vleck-second-order}
\end{align}
The sign of the corresponding displayed $R_{UU}$ component depends on
the Riemann-tensor convention, whereas the geometrically defined
Van Vleck determinant in Eq.~\eqref{eq:van-vleck-exact} does not.

For a massless scalar field, the local direct Hadamard term is
proportional to $\Delta^{1/2}/(2\sigma-\ii0)$.  Its expansion through
second order is therefore
\begin{align}
 G_F^{\mathrm{sing}}(X,X')=\mathcal N\Bigg\{&
 \frac{1}{D_0-\ii0}
 +a\frac{S}{(D_0-\ii0)^2}\notag\\
 &+a^2\left[
 \frac{S^2}{(D_0-\ii0)^3}
 -\frac{T}{(D_0-\ii0)^2}
 +\frac{(U-U')^2}{6(D_0-\ii0)}
 \right]\Bigg\}+O(a^3).
 \label{eq:GF-second-order}
\end{align}
Every term in Eq.~\eqref{eq:GF-second-order} is generated by expanding
the standard geometrical Hadamard singularity.  In particular, no
additional logarithm is required.  The scalar curvature $R_{\rm sc}$ of a
plane wave vanishes, and hence the leading local logarithmic Hadamard coefficient
for a massless scalar, which is proportional to
$m^2+(\xi-1/6)R_{\rm sc}$, vanishes for arbitrary curvature coupling $\xi$ in
the present massless case.

The $O(a^2)$ terms should not by themselves be interpreted as a
renormalized vacuum-polarization signal.  They belong to the universal
geometrical singular part that must be subtracted in point splitting.
Any nonzero renormalized observable must reside in the smooth,
state-dependent remainder.

\subsection{Origin and status of the additional logarithms}
\label{sec:origin-logs}

The characteristic solution makes it possible to separate two issues
that can be obscured in a polynomial ansatz.  A polynomial containing
$Uz$ and $z^2$ may include both a particular solution and a freely added
homogeneous solution.  Indeed,
\begin{equation}
 C(q,R)\zeta^2=Cz^2+\frac{\sqrt{2}Cq}{R}Uz
 +\frac{Cq^2}{2R^2}U^2
 \label{eq:homogeneous-square}
\end{equation}
solves the homogeneous $g$ equation for every momentum-dependent
coefficient $C(q,R)$.

As an instructive example, the polynomial form
\begin{equation}
 g_2=-\frac{\ii q}{2\sqrt{2}}Uz-\frac{\ii R}{4}z^2
 \label{eq:old-g2-form}
\end{equation}
can be rewritten exactly as
\begin{equation}
 g_2=-\frac{\ii R}{4}\zeta^2
 +\frac{\ii q^2}{8R}U^2.
 \label{eq:old-g2-decomposition}
\end{equation}
The first term is homogeneous, while the second is a particular term for
the source assigned to that equation.  This example has a coefficient
$C=-\ii R/4$ which is regular in transverse momentum.

By contrast, after correcting the sign required by its differential
equation, the polynomial structure associated with the alternative form
is
\begin{equation}
 \widetilde g_1=
 -\frac{\ii(k_1^2+k_2^2)}{\sqrt{2}q}Uz
 -\frac{\ii R(k_1^2+k_2^2)}{2q^2}z^2,
 \label{eq:old-g1-corrected-form}
\end{equation}
which becomes
\begin{equation}
 \widetilde g_1=
 -\frac{\ii R(k_1^2+k_2^2)}{2q^2}\zeta^2
 +\frac{\ii(k_1^2+k_2^2)}{4R}U^2.
 \label{eq:old-g1-decomposition}
\end{equation}
Here the homogeneous coefficient contains $q^{-2}$.  Therefore the two
polynomial choices do not merely represent two harmless ways of writing
the same particular solution: they select different homogeneous
spectral data, and one selection is singular at $q=0$.

The appearance of a logarithm from such data is a standard transverse
Fourier-transform effect.  With
\begin{equation}
 \rho^2=(x-x')^2+(y-y')^2,
 \label{eq:rho-definition}
\end{equation}
one has, in the distributional sense,
\begin{equation}
 \int\frac{\dd^2\bm{k}}{(2\pi)^2}
 \frac{e^{\ii\bm{k}\cdot\bm{\rho}}}{\bm{k}^2+\mu^2}
 =\frac{1}{2\pi}K_0(\mu\rho),
 \label{eq:two-dimensional-transform}
\end{equation}
and
\begin{equation}
 K_0(\mu\rho)=-\ln\left(\frac{\mu\rho}{2}\right)
 -\gamma_{\rm E}+O(\mu^2\rho^2\ln\rho).
 \label{eq:K0-small-argument}
\end{equation}
Thus inverse powers introduced through the homogeneous coefficient can
generate the logarithm directly.  Its origin is the singular transverse
zero-mode prescription, not the local geometry.

This distinction has a decisive Hadamard interpretation.  If
$G_1^+$ and $G_2^+$ are two Hadamard two-point functions for the same
field equation, then
\begin{equation}
 G_1^+(X,X')-G_2^+(X,X')
 \label{eq:hadamard-difference}
\end{equation}
is smooth in a normal neighbourhood.  A newly generated
$\ln\rho$ or $\ln\sigma$ singularity in their difference violates this
criterion.  Such a term cannot be interpreted as an alternative
Hadamard state on the same background.  It signals that the added
homogeneous mode data fail the admissibility condition required of the
spectral construction.

Discarding an infrared constant such as $\ln\mu$ after
Eq.~\eqref{eq:K0-small-argument} does not resolve this issue.  It fixes
part of the infrared prescription but leaves the nonsmooth
$\ln\rho$ dependence.  The boundary/state prescription must instead be
fixed before the spectral integration; for the first geometry, matching
to the exact modes imposes Eq.~\eqref{eq:exact-matching-condition} and
removes the problematic homogeneous contribution.

\section{Impulsive matching and smooth-front regularization}
\label{sec:impulsive-limit}

\subsection{Matching across the impulsive surface}
\label{sec:impulsive-matching}

The Rosen scale factors are piecewise defined by
\begin{align}
 A_x(U)=A_y(U)&=1, && U<0,\notag\\
 A_x(U)&=\cos(aU)+\sin(aU), && U>0,\notag\\
 A_y(U)&=\cos(aU)-\sin(aU), && U>0.
 \label{eq:piecewise-scale-factors}
\end{align}
They are continuous at $U=0$, although their first derivatives jump.
The opposite jumps in the two transverse directions generate the
impulsive Weyl curvature, while their trace cancels.  In particular, the
scalar curvature vanishes distributionally, so a term $\xi R_{\rm sc}\phi$ does
not introduce an additional delta-function junction condition.

For the Rosen metric, the scalar wave equation can be written as
\begin{align}
 \Box\phi={}&2\partial_U\partial_V\phi
 +\left(\frac{A_x'}{A_x}+\frac{A_y'}{A_y}\right)
 \partial_V\phi
 -\frac{1}{A_x^2}\partial_x^2\phi
 -\frac{1}{A_y^2}\partial_y^2\phi.
 \label{eq:piecewise-wave-operator}
\end{align}
Integrating this equation over an infinitesimal interval across $U=0$
gives
\begin{equation}
 2\partial_V\left[\phi(0^+)-\phi(0^-)\right]=0.
 \label{eq:null-junction-integrated}
\end{equation}
For a Fourier mode with $R\neq0$, Eq.~\eqref{eq:null-junction-integrated}
implies
\begin{equation}
 \phi(0^+)=\phi(0^-).
 \label{eq:field-continuity}
\end{equation}
The $R=0$ sector is a separate characteristic zero mode and is not fixed
by this argument; it must be specified as part of the infrared state
prescription.  This is precisely the sector in which singular
momentum-space homogeneous additions require special care.

An incoming on-shell flat-region mode may be chosen as
\begin{equation}
 \phi_-(X)=\frac{1}{(2\pi)^{3/2}\sqrt{2|R|}}
 \exp\left\{\ii\left[RV+k_1x+k_2y
 +\frac{k_1^2+k_2^2}{2R}U\right]\right\},
 \qquad U<0.
 \label{eq:incoming-flat-mode}
\end{equation}
Choosing the integration constant in the exact phase so that $Q_+(0)=0$
gives the transmitted mode
\begin{align}
 \phi_+(X)={}&\frac{e^{\ii(RV+k_1x+k_2y)}}
 {(2\pi)^{3/2}\sqrt{2|R|\cos(2aU)}}\notag\\
 &\times\exp\left\{\frac{\ii}{4aR}\left[
 (k_1^2+k_2^2)\tan(2aU)
 -(k_1^2-k_2^2)\bigl(\sec(2aU)-1\bigr)
 \right]\right\},
 \qquad U>0.
 \label{eq:transmitted-exact-mode}
\end{align}
Equations~\eqref{eq:incoming-flat-mode} and
\eqref{eq:transmitted-exact-mode} agree at $U=0$.  Their first
$U$ derivatives also agree there, because
$[\cos(2aU)]^{-1/2}{}'|_{0}=0$ and
\begin{equation}
 Q_+'(0)=\frac{k_1^2+k_2^2}{2R}.
 \label{eq:phase-derivative-at-impulse}
\end{equation}
Thus no additional arbitrary homogeneous term is required by the
impulsive matching.

\subsubsection{World function for points on opposite sides}
\label{sec:cross-impulse-world-function}

Let $U'<0<U$ and set $\Delta U=U-U'$.  The transverse integrals now
split at the impulse:
\begin{align}
 I_x^{+-}(U,U')&=-U'
 +\frac{1}{a}\left[1-\frac{1}{1+\tan(aU)}\right],
 \label{eq:Ix-cross-exact}\\
 I_y^{+-}(U,U')&=-U'
 +\frac{1}{a}\left[\frac{1}{1-\tan(aU)}-1\right].
 \label{eq:Iy-cross-exact}
\end{align}
Their perturbative expansions are
\begin{align}
 I_x^{+-}&=\Delta U-aU^2+\frac{4}{3}a^2U^3+O(a^3),
 \label{eq:Ix-cross-expansion}\\
 I_y^{+-}&=\Delta U+aU^2+\frac{4}{3}a^2U^3+O(a^3).
 \label{eq:Iy-cross-expansion}
\end{align}
Substitution in Eq.~\eqref{eq:world-function-rosen} yields
\begin{equation}
 2\sigma_{+-}=D_0-aS_{+-}+a^2T_{+-}+O(a^3),
 \label{eq:sigma-cross-expansion}
\end{equation}
where
\begin{align}
 S_{+-}&=\frac{U^2}{\Delta U}
 \left[(x-x')^2-(y-y')^2\right],
 \label{eq:S-cross}\\
 T_{+-}&=\left(\frac{4U^3}{3\Delta U}
 -\frac{U^4}{(\Delta U)^2}\right)
 \left[(x-x')^2+(y-y')^2\right].
 \label{eq:T-cross}
\end{align}
The cross-impulse Van Vleck determinant is
\begin{equation}
 \Delta_{+-}=
 \frac{(\Delta U)^2}
 {A_x(U)A_y(U)I_x^{+-}I_y^{+-}},
 \label{eq:van-vleck-cross-exact}
\end{equation}
and hence
\begin{equation}
 \Delta_{+-}^{1/2}=1+\frac{a^2}{2}\left[
 2U^2-\frac{8U^3}{3\Delta U}
 +\frac{U^4}{(\Delta U)^2}\right]+O(a^3).
 \label{eq:sqrt-van-vleck-cross}
\end{equation}
Accordingly, the singular cross-impulse two-point function is obtained
without introducing any new singular structure:
\begin{equation}
 G_{F,+-}^{\rm sing}(X,X')
 =\mathcal N\frac{\Delta_{+-}^{1/2}}
 {2\sigma_{+-}-\ii0}.
 \label{eq:cross-green-function}
\end{equation}
The impulse modifies the world function and focusing factor, but it does
not alter the local Hadamard class.  These modifications encode global
propagation through the wavefront rather than an additional coincidence
singularity.

\subsection{Smooth-front regularization}
\label{sec:smooth-front}

The distributional matching above can be recovered as the limit of a
family of smooth plane waves.  Let $\rho\in C_0^\infty(-1,1)$ be
nonnegative and even, with $\int\rho(s)\,\dd s=1$, and define
\begin{equation}
 \rho_\epsilon(U)=\frac{1}{\epsilon}\rho\!\left(\frac{U}{\epsilon}\right),
 \qquad
 u_\epsilon(U)=(\rho_\epsilon*U_+)(U),
 \qquad U_+=U\Theta(U).
 \label{eq:smooth-front-definition}
\end{equation}
Then $u_\epsilon$ is smooth,
$u_\epsilon'=\rho_\epsilon*\Theta$, and
$u_\epsilon''=\rho_\epsilon$.  Evenness of the mollifier implies
\begin{equation}
 u_\epsilon(U)=0\quad(U<-\epsilon),
 \qquad
 u_\epsilon(U)=U\quad(U>\epsilon),
 \qquad
 u_\epsilon\longrightarrow U_+
 \label{eq:smooth-front-properties}
\end{equation}
uniformly on compact intervals.  Thus this is a regularization of the
wavefront rather than a modification of either exterior region.

For the first profile introduce the smooth Rosen scale factors
\begin{equation}
 A_{x,\epsilon}(U)=\cos(au_\epsilon)+\sin(au_\epsilon),
 \qquad
 A_{y,\epsilon}(U)=\cos(au_\epsilon)-\sin(au_\epsilon),
 \label{eq:smooth-scale-factors}
\end{equation}
and the metric
\begin{equation}
 \dd s_\epsilon^2=2\dd U\dd V
 -A_{x,\epsilon}^2\dd x^2-A_{y,\epsilon}^2\dd y^2.
 \label{eq:smooth-rosen-metric}
\end{equation}

\subsubsection{Distributional curvature limit}
\label{sec:distributional-curvature-limit}

The curvature limit can be evaluated without differentiating a
distributional metric.  Since
\begin{equation}
 \frac{\dd A_{x,\epsilon}}{\dd u_\epsilon}
 =aA_{y,\epsilon},
 \qquad
 \frac{\dd A_{y,\epsilon}}{\dd u_\epsilon}
 =-aA_{x,\epsilon},
 \label{eq:smooth-scale-derivatives}
\end{equation}
two combinations of the transverse tidal matrix are
\begin{align}
 \mathcal R_\epsilon
 &:=-\frac12\left(
 \frac{A_{x,\epsilon}''}{A_{x,\epsilon}}
 +\frac{A_{y,\epsilon}''}{A_{y,\epsilon}}
 \right)
 =a^2(u_\epsilon')^2
 +a u_\epsilon''\tan(2au_\epsilon),
 \label{eq:smooth-R-profile}\\
 \mathcal W_\epsilon
 &:=-\frac12\left(
 \frac{A_{x,\epsilon}''}{A_{x,\epsilon}}
 -\frac{A_{y,\epsilon}''}{A_{y,\epsilon}}
 \right)
 =-a u_\epsilon''\sec(2au_\epsilon).
 \label{eq:smooth-W-profile}
\end{align}
Up to the overall Riemann-sign convention,
$R_{UU}=2\mathcal R_\epsilon$, whereas
$\mathcal W_\epsilon$ is the independent trace-free Weyl amplitude in a
parallel transverse frame.  The scalar curvature is identically zero:
the Ricci tensor is null aligned and $g^{UU}=0$.

These formulas make the distributional content explicit.  For any test
function $f\in C_0^\infty(\mathbb R)$,
\begin{align}
 (u_\epsilon')^2&\longrightarrow\Theta(U),
 \label{eq:up-square-limit}\\
 u_\epsilon''\sec(2au_\epsilon)&\longrightarrow\delta(U),
 \label{eq:weyl-delta-limit}\\
 u_\epsilon''\tan(2au_\epsilon)&\longrightarrow0
 \label{eq:ricci-delta-zero}
\end{align}
in $\mathcal D'(\mathbb R)$.  The first statement follows because
$u_\epsilon'$ differs from $\Theta$ only in an interval of length
$2\epsilon$ and remains uniformly bounded.  On the support of
$u_\epsilon''=\rho_\epsilon$, one has
$u_\epsilon=O(\epsilon)$; hence the secant tends uniformly to unity,
while the tangent is $O(\epsilon)$.  Equivalently,
\begin{align}
 \int \rho_\epsilon(U)\sec(2au_\epsilon)f(U)\,\dd U
 &\longrightarrow f(0),
 \label{eq:weyl-test-function}\\
 \int \rho_\epsilon(U)\tan(2au_\epsilon)f(U)\,\dd U
 &\longrightarrow0.
 \label{eq:ricci-test-function}
\end{align}
Consequently,
\begin{equation}
 \mathcal R_\epsilon\longrightarrow a^2\Theta(U),
 \qquad
 \mathcal W_\epsilon\longrightarrow-a\delta(U).
 \label{eq:curvature-distributional-limit}
\end{equation}
Thus the smooth family reproduces both pieces of the intended geometry:
the post-front Ricci profile and the impulsive, trace-free Weyl curvature.
No delta-function scalar curvature is generated, in agreement with the
junction argument preceding Eq.~\eqref{eq:null-junction-integrated}.

On a patch before the first Rosen caustic, all scale factors are bounded
away from zero.  The exact smooth wave operator is
\begin{align}
 \Box_\epsilon={}&2\partial_U\partial_V
 +\left(\frac{A_{x,\epsilon}'}{A_{x,\epsilon}}
 +\frac{A_{y,\epsilon}'}{A_{y,\epsilon}}\right)\partial_V
 -\frac{1}{A_{x,\epsilon}^2}\partial_x^2
 -\frac{1}{A_{y,\epsilon}^2}\partial_y^2.
 \label{eq:smooth-wave-operator}
\end{align}
It is an ordinary normally hyperbolic operator for every
$\epsilon>0$.

Choose a reference point $U_0<-\epsilon$ and set
\begin{equation}
 I_{i,\epsilon}(U,U_0)=\int_{U_0}^{U}
 \frac{\dd s}{A_{i,\epsilon}^2(s)},
 \qquad i=x,y.
 \label{eq:smooth-I-definition}
\end{equation}
For $R\neq0$, the incoming positive-frequency modes propagate exactly as
\begin{align}
 \phi_{\epsilon;k_1k_2R}(X)={}&
 \frac{\exp[\ii(RV+k_1x+k_2y)]}
 {(2\pi)^{3/2}\sqrt{2|R|A_{x,\epsilon}(U)A_{y,\epsilon}(U)}}
 \notag\\
 &\times\exp\left\{\frac{\ii}{2R}\left[
 k_1^2 I_{x,\epsilon}(U,U_0)
 +k_2^2 I_{y,\epsilon}(U,U_0)
 \right]\right\}.
 \label{eq:smooth-exact-mode}
\end{align}
Direct substitution in Eq.~\eqref{eq:smooth-wave-operator} verifies this
expression.  Most importantly, smoothing changes only the coefficients
of $k_1^2$ and $k_2^2$ in the phase.  It never produces inverse powers of
the transverse momenta.  The null Killing momentum $R$ is conserved, so
positive and negative $R$ sectors do not mix.

\subsubsection{Klein--Gordon normalization}
\label{sec:kg-normalization}

On a null hypersurface $U=\mathrm{const.}$, choose the orientation for
which the positive-$R$ modes have positive norm and define
\begin{equation}
 (\phi_1,\phi_2)_{\rm KG}
 =-\ii\int_{-\infty}^{\infty}\dd V\dd x\dd y\,
 A_{x,\epsilon}A_{y,\epsilon}
 \left(\phi_1^*\partial_V\phi_2
 -\partial_V\phi_1^*\,\phi_2\right).
 \label{eq:null-KG-product}
\end{equation}
The factor $A_{x,\epsilon}A_{y,\epsilon}=\sqrt{|g_\epsilon|}$
is the natural flux density on the null slice.  Substitution of
Eq.~\eqref{eq:smooth-exact-mode} gives
\begin{align}
 &(\phi_{\epsilon;k_1k_2R},
 \phi_{\epsilon;k_1'k_2'R'})_{\rm KG}\notag\\
 &\quad=\frac{R+R'}{2\sqrt{|RR'|}}
 \delta(R-R')\delta(k_1-k_1')\delta(k_2-k_2')\notag\\
 &\quad=\operatorname{sgn}(R)\,
 \delta(R-R')\delta(k_1-k_1')\delta(k_2-k_2').
 \label{eq:KG-normalization-result}
\end{align}
In the last line the delta functions have been applied.  The Rosen
Jacobian cancels exactly against the two mode amplitudes, while the phases
cancel on their common spectral support.  The normalization is therefore
independent of $U$ and of $\epsilon$.  Restriction to $R>0$ gives the
standard positive unit delta norm; the $R<0$ modes form the negative-norm
complex-conjugate sector.  This calculation also explains why the
on-shell modes in Eqs.~\eqref{eq:incoming-flat-mode},
\eqref{eq:transmitted-exact-mode}, and
\eqref{eq:smooth-exact-mode} carry $(2\pi)^{-3/2}$ rather than the
$(2\pi)^{-2}$ convention of the four-label off-shell basis in
Eq.~\eqref{eq:zeroth-mode}.

\subsubsection{Smeared two-point-function limit and the zero mode}
\label{sec:smeared-two-point-limit}

The passage from mode convergence to convergence of a two-point function
requires an infrared qualification.  Let
$\chi_\delta\in C^\infty(\mathbb R)$ vanish for $R\leq\delta/2$ and equal
one for $R\geq\delta$, and let
$\zeta_M\in C_0^\infty(0,\infty)$ equal one for $R\leq M$ and vanish
for $R\geq2M$.  Finally, let
$\eta_\Lambda\in C_0^\infty(\mathbb R^2)$ equal one on
$k_1^2+k_2^2\leq\Lambda^2$.  The positive-frequency kernel with these
spectral cutoffs is
\begin{align}
 G^+_{\epsilon;\delta,M,\Lambda}(X,X')
  =\int_0^\infty\!\dd R\int_{\mathbb R^2}\!\dd^2k\,
  \chi_\delta(R)\zeta_M(R)\eta_\Lambda(\boldsymbol{k})
  \phi_{\epsilon;\boldsymbol{k}R}(X)
  \phi_{\epsilon;\boldsymbol{k}R}^*(X').
 \label{eq:cutoff-Wightman-kernel}
\end{align}
The upper cutoff may subsequently be removed in the usual
oscillatory-distribution sense.  On a compact
normal neighbourhood before a caustic, and on every compact spectral set
with $R\geq\delta/2$, Eq.~\eqref{eq:smooth-geometric-limit} makes the mode
amplitudes and phases converge uniformly.  Dominated convergence therefore
gives, for all $f,h\in C_0^\infty(\mathcal O)$,
\begin{equation}
 \lim_{\epsilon\to0}
 \left\langle G^+_{\epsilon;\delta,M,\Lambda},f\otimes h\right\rangle
 =\left\langle G^+_{0;\delta,M,\Lambda},f\otimes h\right\rangle .
 \label{eq:smeared-cutoff-limit}
\end{equation}
Thus the smooth-front limit holds in $\mathcal D'(\mathcal O\times\mathcal O)$
after infrared separation and finite spectral localization.  The ultraviolet
cutoff can be removed locally by the common Hadamard prescription: the
singular contribution is the boundary value determined by
$\sigma_\epsilon$ and $\Delta_\epsilon$, while the remaining cutoff
difference is smooth.  The corresponding Feynman statement follows by
time ordering the same Wightman boundary values, or equivalently by using
the common Feynman boundary prescription.

Removing $\delta$ is a separate state choice.  We assume that the spectral
coefficients define a tempered, locally integrable distribution near
$R=0$ after smearing in both spacetime arguments, with no term supported at
$R=0$ and no momentum-singular homogeneous coefficient such as
$1/(k_1^2+k_2^2)$.  Under this infrared regularity condition the
$\delta\downarrow0$ limit is defined by the same prescription for every
$\epsilon$ and commutes with the smooth-front limit.  Without that
condition, the junction equation does not determine the zero mode, and no
claim of convergence for that sector is intended.  In particular, the fact
that $R=0$ has Lebesgue measure zero would not by itself exclude a
delta-supported zero-mode contribution.

The same conclusion follows geometrically.  For any two points in a
common normal neighbourhood, define
$I_{i,\epsilon}(U,U')$ as in Eq.~\eqref{eq:smooth-I-definition} with
lower limit $U'$.  Equations~\eqref{eq:world-function-rosen} and
\eqref{eq:van-vleck-exact} remain valid after the replacements
$A_i\to A_{i,\epsilon}$ and $I_i\to I_{i,\epsilon}$.  Hence
\begin{align}
 \sigma_\epsilon(X,X')={}&(U-U')(V-V')
 -\frac{U-U'}{2}\left[
 \frac{(x-x')^2}{I_{x,\epsilon}}
 +\frac{(y-y')^2}{I_{y,\epsilon}}
 \right],
 \label{eq:smooth-world-function}\\
 \Delta_\epsilon(X,X')={}&
 \frac{(U-U')^2}
 {A_{x,\epsilon}(U)A_{y,\epsilon}(U)
  A_{x,\epsilon}(U')A_{y,\epsilon}(U')
  I_{x,\epsilon}I_{y,\epsilon}}.
 \label{eq:smooth-van-vleck}
\end{align}

Uniform convergence of $u_\epsilon$ and boundedness away from a caustic
give, on compact normal neighbourhoods,
\begin{equation}
 A_{i,\epsilon}\longrightarrow A_i,
 \qquad I_{i,\epsilon}\longrightarrow I_i,
 \qquad \sigma_\epsilon\longrightarrow\sigma,
 \qquad \Delta_\epsilon\longrightarrow\Delta.
 \label{eq:smooth-geometric-limit}
\end{equation}
The contribution of the transition strip to each $I_{i,\epsilon}$ is
$O(\epsilon)$, and therefore no finite matching constant is hidden in the
limit.

\begin{proposition}[Stability of the matched Hadamard singularity]
\label{prop:smooth-front}
On any compact normal neighbourhood avoiding a Rosen caustic, the matched
two-point functions constructed from
Eq.~\eqref{eq:smooth-exact-mode} converge in the smeared sense of
Eq.~\eqref{eq:smeared-cutoff-limit}, away from the characteristic $R=0$
sector, to the impulsive matched two-point function.  After removal of the
ultraviolet localization by the common Hadamard boundary prescription,
their singular parts converge to
\begin{equation}
 G_{F,0}^{\rm sing}(X,X')=
 \mathcal N\frac{\Delta^{1/2}(X,X')}{2\sigma(X,X')-\ii0},
 \label{eq:smooth-Hadamard-limit}
\end{equation}
up to the standard locally determined Hadamard tail term when present.
No additional transverse logarithm is generated by the smooth-front
limit.
\end{proposition}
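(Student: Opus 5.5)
The argument splits into three steps: uniform convergence of the geometric data, convergence of the cutoff kernel, and identification of the singular part.

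\emph{Geometric data.} We first show that every ingredient entering the modes \eqref{eq:smooth-exact-mode} and the local Hadamard data converges uniformly, with derivatives, on the compact normal neighbourhood $\mathcal O$. Because $u_\epsilon\to U_+$ uniformly and $A_{i,\epsilon}$ are bounded away from zero before the caustic, we have $A_{i,\epsilon}\to A_i$ and $1/A_{i,\epsilon}^2\to 1/A_i^2$ uniformly. The integrals $I_{i,\epsilon}$ then converge uniformly, with an $O(\epsilon)$ strip contribution, and in particular no finite matching constant appears. Away from $U=\pm\epsilon$, derivatives in $U$ converge as well. This is Eq.~\eqref{eq:smooth-geometric-limit}. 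Inserted into \eqref{eq:smooth-exact-mode}, it gives uniform convergence of amplitudes and phases on every compact spectral set with $R\ge\delta/2$. The limit mode coincides with the matched impulsive mode \eqref{eq:incoming-flat-mode}--\eqref{eq:transmitted-exact-mode}, since $I_{i,0}$ splits at $U=0$ exactly as in Eqs.~\eqref{eq:Ix-cross-exact}--\eqref{eq:Iy-cross-exact}.

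\emph{Cutoff kernel.} With the cutoffs $\chi_\delta\zeta_M\eta_\Lambda$ in place, the integrand of \eqref{eq:cutoff-Wightman-kernel} is uniformly bounded on a compact spectral support. Dominated convergence then yields \eqref{eq:smeared-cutoff-limit} for each fixed $(\delta,M,\Lambda)$. This establishes the first assertion of the proposition, away from the $R=0$ sector.

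\emph{Singular part and UV removal.} For each $\epsilon>0$ the operator $\Box_\epsilon$ is normally hyperbolic. The Klein--Gordon normalization \eqref{eq:KG-normalization-result} is independent of $\epsilon$, so all kernels obey the same canonical commutator. The incoming state is the flat vacuum on $U<-\epsilon$, hence Hadamard there, and propagation of singularities for a smooth normally hyperbolic operator keeps it Hadamard throughout $\mathcal O$. Its singular part is therefore $\mathcal N\Delta_\epsilon^{1/2}/(2\sigma_\epsilon-\ii0)$, plus the locally determined tail. For the impulsive limit we use Eqs.~\eqref{eq:sigma-cross-expansion}--\eqref{eq:cross-green-function}. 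Since $\sigma_\epsilon\to\sigma$ and $\Delta_\epsilon\to\Delta$ uniformly with $\nabla\sigma_\epsilon\neq0$ off the diagonal, the boundary values converge as distributions. We plan to verify this by writing $(2\sigma_\epsilon-\ii0)^{-1}$ as the $\partial$-derivative of a logarithm and integrating by parts against test functions. Removing $\Lambda$ and $M$ then leaves differences that are smooth for each $\epsilon$, and these converge by the spectral bounds above. Finally, the absence of a transverse logarithm follows from the fact that the mode phases contain only $k_1^2,k_2^2$ times bounded coefficients, with no $1/(k_1^2+k_2^2)$. Consequently Eq.~\eqref{eq:two-dimensional-transform} never enters, and Theorem~\ref{thm:general-diagnostic} excludes any nonsmooth discrepancy.

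\emph{Main obstacle.} The hardest step is making the remainder smooth uniformly in $\epsilon$. Hadamard regularity holds for each $\epsilon$ separately, but the curvature $\mathcal W_\epsilon$ tends to $-a\delta(U)$, so the Hadamard recursion coefficients are not uniformly bounded across the strip. We would handle this by using the exact Rosen formulas, which do not rely on the recursion, together with the observation that for plane waves the massless tail vanishes locally when $R_{\rm sc}=0$. This reduces the uniformity question to uniform convergence of $\sigma_\epsilon$ and $\Delta_\epsilon$, which the first step already supplies.
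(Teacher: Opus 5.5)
Your skeleton matches the paper's proof. It has three ingredients: dominated convergence of the cutoff kernel \eqref{eq:cutoff-Wightman-kernel} on the compact spectral set with $R\geq\delta/2$; Hadamard form at each $\epsilon>0$ combined with $\sigma_\epsilon\to\sigma$ and $\Delta_\epsilon\to\Delta$ from \eqref{eq:smooth-geometric-limit}; and the fact that the phase of \eqref{eq:smooth-exact-mode} is quadratic in $k_1,k_2$. That last fact means no inverse transverse Laplacian, and hence no $K_0$ or $\ln\rho$, can arise. Two of the justifications you add, however, fail as written.

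The first is your derivation of the Hadamard property at fixed $\epsilon$ by propagation of singularities out of the flat region $U<-\epsilon$. This needs every bicharacteristic through $\mathcal O$ to reach the flat region. Those with covector proportional to $\dd U$ are integral curves of $\partial_V$ and stay at fixed $U$, so no region containing points with $U>\epsilon$ has a Cauchy surface inside $U<-\epsilon$. Since $\dd U$ has vanishing $V$-component, these are exactly the codirections of the characteristic $R\to0$ sector, so the gap is not cosmetic. The paper simply asserts the Hadamard form for the smooth metric; what backs that assertion is an explicit computation. Because $I_{i,\epsilon}(U,U_0)-I_{i,\epsilon}(U',U_0)=I_{i,\epsilon}(U,U')$, the Gaussian evaluation \eqref{eq:transverse-Gaussian}--\eqref{eq:remaining-R-integral} carries over verbatim for each $\epsilon$, with the same $R=0$ proviso. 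It gives $\mathcal N\Delta_\epsilon^{1/2}/(2\sigma_\epsilon-\ii0)$ with \eqref{eq:smooth-world-function}--\eqref{eq:smooth-van-vleck}. The tail vanishes because $\Delta_\epsilon$ depends only on $(U,U')$, so $\Box_\epsilon\Delta_\epsilon^{1/2}=0$; $R_{\rm sc}=0$ alone would not suffice.

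The second is the ultraviolet step. Your ``spectral bounds above'' hold only on compact spectral sets, so they give no control of the $\Lambda,M\to\infty$ tails uniformly in $\epsilon$. Your logarithm/integration-by-parts device also needs $\nabla\sigma_\epsilon$ bounded away from zero, which fails at the vertex of the light cone. Distributional convergence of the full boundary values would therefore need a separate argument near the diagonal. The paper deliberately avoids this: convergence of the singular parts is meant coefficientwise, as convergence of $\sigma_\epsilon$ and $\Delta_\epsilon$, and the remark after the proof disclaims uniform high-frequency estimates. Either drop this step or supply those estimates; as written it establishes nothing beyond the coefficientwise statement.

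Relatedly, ``uniform convergence with derivatives'' is false across the strip, since $A_{i,\epsilon}'\propto u_\epsilon'$ converges only in $L^1_{\rm loc}$. Only $A_{i,\epsilon}$, $A_{i,\epsilon}^{-2}$, $I_{i,\epsilon}$, $\sigma_\epsilon$ and $\Delta_\epsilon$ are actually needed, and these do converge uniformly.

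Finally, Theorem~\ref{thm:general-diagnostic} is not what excludes the logarithm here. It compares two constructions for one operator, whereas $\Box_\epsilon$ varies with $\epsilon$. The polynomial dependence on $k_1,k_2$ settles the point by itself, exactly as in the paper.
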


\begin{proof}
For fixed $\delta$ and finite spectral localization,
Eq.~\eqref{eq:smeared-cutoff-limit} proves distributional convergence.
For every $\epsilon>0$, the metric is smooth and its two-point function
has the local Hadamard form.  Equation~\eqref{eq:smooth-geometric-limit}
then gives convergence of its geometrical singular coefficients to those
of the piecewise Rosen metric.  Independently, the spectral phase in
Eq.~\eqref{eq:smooth-exact-mode} is quadratic in the transverse momenta;
its perturbative expansion produces only nonnegative powers of $k_1$ and
$k_2$, which act as transverse derivatives of the flat propagator.  The
inverse transverse Laplacian required for a $K_0$ or $\ln\rho$ term is
never generated.  The $R=0$ sector is included only if the explicit
infrared regularity condition above is imposed; otherwise its prescription
is independent state data rather than a consequence of the smoothing.
\end{proof}

The proposition separates two levels of convergence.  Equation
\eqref{eq:smeared-cutoff-limit} is a distributional statement with explicit
spectral localization, whereas Eq.~\eqref{eq:smooth-Hadamard-limit} is a
coefficientwise statement about the local Hadamard parametrix.  A claim of
convergence of the complete uncut two-point functions in the microlocal
topology would additionally require uniform high-frequency estimates; no
such stronger claim is needed for the exclusion of a new transverse
logarithm.

This result supplies the promised independent check of the impulsive
matching.  The limit agrees with
Eqs.~\eqref{eq:cross-green-function} and \eqref{eq:main-proposition} and
shows explicitly that narrowing a smooth wavefront cannot create the
momentum-singular homogeneous term responsible for the spurious
logarithms.

\section{Hadamard subtraction and the stress tensor}
\label{sec:stress-tensor}

In four dimensions, a local Feynman two-point function for a scalar field
can be decomposed schematically as
\begin{equation}
 G_F(X,X')=G_H(X,X')+W(X,X'),
 \label{eq:GF-Hadamard-split}
\end{equation}
where $G_H$ contains the universal singular terms and $W$ is smooth for a
Hadamard state.  This is the standard starting point for local observables
in curved-spacetime quantum field theory
\cite{BirrellDavies1982,WaldBook1994,Christensen1976}.  For the massless field considered here, the scalar
curvature $R_{\rm sc}$ vanishes and the tail transport equation has a
vanishing source.  More directly, the
plane-wave Van Vleck determinant depends only on $U$ and $U'$, and any
biscalar with this dependence has vanishing d'Alembertian.  It follows
that the scalar tail coefficient vanishes:
\begin{equation}
 V(X,X')=0.
 \label{eq:tail-vanishes}
\end{equation}
The corresponding retarded Green function is therefore supported only
on the null cone, a well-known Huygens property of massless scalar fields
on plane-wave backgrounds \cite{Gibbons1975,Harte2013}.

For the state defined by positive-$R$ incoming modes in the flat region,
the transmitted mode in Eq.~\eqref{eq:transmitted-exact-mode} retains the
same $R$ and never mixes with its complex conjugate.  The Bogoliubov
coefficient which would mix positive and negative null frequencies thus
satisfies
\begin{equation}
 \beta_{\bm{k}\bm{k}'}=0.
 \label{eq:beta-zero}
\end{equation}
There is no particle creation by the plane wave.  This fact and the absence
of a tail do not, by themselves, set the smooth Hadamard remainder to zero:
an arbitrary Hadamard state may still add a smooth bisolution.  For the
specific in-state fixed by the flat-region positive-frequency modes,
however, the positive-frequency kernel is fixed without further smooth
data:
\begin{equation}
 G^+(X,X')=\int_0^\infty\dd R\int_{\mathbb R^2}\dd^2\boldsymbol{k}\,
 \phi_{\boldsymbol{k}R}(X)\phi_{\boldsymbol{k}R}^*(X').
 \label{eq:exact-mode-sum}
\end{equation}
Inserting Eq.~\eqref{eq:transmitted-exact-mode}, the two transverse
integrals are Gaussian.  With $\Delta x_i=x_i-x_i'$ and the integrals
$I_i$ defined in Eq.~\eqref{eq:Ii-definition}, each transverse integration
uses the boundary-value identity
\begin{equation}
 \int_{-\infty}^{\infty}\dd k_i\,
 \exp\left[\frac{\ii I_i}{2R}k_i^2+\ii k_i\Delta x_i\right]
 =\left(\frac{2\pi\ii R}{I_i}\right)^{1/2}
 \exp\left[-\frac{\ii R(\Delta x_i)^2}{2I_i}\right],
 \label{eq:transverse-Gaussian}
\end{equation}
where the square root is fixed by the Feynman boundary prescription.
The remaining positive-$R$ boundary value is proportional to
\begin{equation}
 \int_0^\infty\dd R\,
 \exp\left\{\ii R\left[\Delta V
 -\frac{(\Delta x)^2}{2I_x}
 -\frac{(\Delta y)^2}{2I_y}+\ii0\right]\right\}.
 \label{eq:remaining-R-integral}
\end{equation}
Using Eqs.~\eqref{eq:world-function-rosen} and
\eqref{eq:van-vleck-exact}, this gives, in a normal neighbourhood, the
geometrical direct term
\begin{equation}
 G_F(X,X')=\mathcal N
 \frac{\Delta^{1/2}(X,X')}{2\sigma(X,X')-\ii0},
 \label{eq:exact-local-Feynman}
\end{equation}
in agreement with exact plane-wave propagator and Wightman-function
constructions \cite{Gibbons1975,Cho2023}.  Thus, in this flat-normalized prescription,
the smooth state-dependent remainder is
\begin{equation}
 W_{\rm flat}(X,X')=0.
 \label{eq:W-zero}
\end{equation}

The point-split vacuum polarization and stress tensor have the general
form
\begin{align}
 \langle\phi^2\rangle_{\rm ren}
 &=\lim_{X'\to X}W(X,X'),
 \label{eq:phi2-ren}\\
 \langle T_{\mu\nu}\rangle_{\rm ren}
 &=\lim_{X'\to X}\mathcal D_{\mu\nu}(X,X')W(X,X')
 +\Theta_{\mu\nu}^{\rm local},
 \label{eq:Tmunu-point-split}
\end{align}
where $\mathcal D_{\mu\nu}$ is the standard bilocal differential
operator determined by the curvature coupling and
$\Theta_{\mu\nu}^{\rm local}$ denotes the finite local renormalization
freedom.  Equation~\eqref{eq:W-zero} gives, for this state and prescription,
\begin{equation}
 \langle\phi^2\rangle_{\rm ren}=0,
 \qquad
 \lim_{X'\to X}\mathcal D_{\mu\nu}W_{\rm flat}=0.
 \label{eq:state-dependent-zero}
\end{equation}

For a massless field, the remaining finite renormalization ambiguity is
constructed from conserved local curvature tensors obtained by varying
curvature-squared actions \cite{Wald1977}.  In four dimensions a convenient
basis is
\begin{align}
 H^{(1)}_{\mu\nu}={}&2\nabla_\mu\nabla_\nu R_{\rm sc}
 -2g_{\mu\nu}\Box R_{\rm sc}
 -\frac12g_{\mu\nu}R_{\rm sc}^2
 +2R_{\rm sc}R_{\mu\nu},
 \label{eq:H1-ambiguity}\\
 H^{(2)}_{\mu\nu}={}&\nabla_\mu\nabla_\nu R_{\rm sc}
 -\frac12g_{\mu\nu}\Box R_{\rm sc}
 -\Box R_{\mu\nu}
 -\frac12g_{\mu\nu}R_{\alpha\beta}R^{\alpha\beta}
 +2R_{\mu\alpha\nu\beta}R^{\alpha\beta}.
 \label{eq:H2-ambiguity}
\end{align}
The Riemann-squared variation is not independent because of the
four-dimensional Gauss--Bonnet identity.  Plane waves have $R_{\rm sc}=0$
and all polynomial scalar curvature invariants vanish.  Moreover, their
Ricci tensor is null aligned,
\begin{equation}
 R_{\mu\nu}=\mathcal R(U)\ell_\mu\ell_\nu,
 \qquad \ell^\mu\ell_\mu=0,
 \label{eq:null-Ricci}
\end{equation}
with $\ell^\mu$ covariantly constant.  Consequently
$\Box R_{\mu\nu}=0$: every derivative of $\mathcal R(U)$ is proportional to
$\ell_\alpha$, whose contraction with itself vanishes.  The aligned null
index structure also gives
$R_{\alpha\beta}R^{\alpha\beta}=0$ and
$R_{\mu\alpha\nu\beta}R^{\alpha\beta}=0$.  Equations
\eqref{eq:H1-ambiguity} and \eqref{eq:H2-ambiguity} therefore vanish
separately away from the distributional wavefront.  Together
with the exact-propagator result above, the standard flat-normalized
prescription therefore yields
\begin{equation}
 \boxed{\langle T_{\mu\nu}\rangle_{\rm ren}=0}
 \label{eq:Tmunu-zero}
\end{equation}
on both smooth sides of the impulse.  The four-dimensional trace anomaly
also vanishes because all of its scalar curvature invariants vanish.

Equation~\eqref{eq:Tmunu-zero} is a statement about the specified
flat-normalized in-state, not about every Hadamard state.  It uses the exact
plane-wave propagator in addition to the two structural checks established
here: tail-free scalar propagation and the absence of positive/negative
frequency mixing.

\subsection{Distributional qualification}
\label{sec:stress-distributional-qualification}

Products of curvature distributions located exactly at $U=0$ are not
defined without a regularization prescription.  We therefore interpret
Eq.~\eqref{eq:Tmunu-zero} pointwise for $U\neq0$.  Any claim about a term
supported precisely on the impulsive surface should be defined by first
replacing the step profile with a smooth-front family, performing the
Hadamard subtraction there, and only then taking the impulsive limit.
The matched Green function shows no evidence for an additional surface
term, but its absence should be established through this regulated
limit rather than through formal products of delta functions.

\section{The second transverse profile}
\label{sec:second-profile}

We now return to the geometrical definition before performing any
perturbative expansion.  In Hogan's construction \cite{Hogan1999}, the
analytic function $G(z)$ and its derivative are related by
\begin{equation}
 H(z)=\frac{\dd G}{\dd z}.
 \label{eq:H-from-G}
\end{equation}
For
\begin{equation}
 G(z)=\frac{a}{2}z^2,
 \qquad H(z)=az,
 \label{eq:second-profile}
\end{equation}
the one-form inside the modulus in Eq.~\eqref{eq:complex-metric} becomes,
for $U>0$,
\begin{equation}
 \omega=\dd z+\tan(aU)\bar z\,\dd\bar z,
 \qquad
 \bar\omega=\dd\bar z+\tan(aU)z\,\dd z.
 \label{eq:second-profile-one-forms}
\end{equation}
Their product is
\begin{equation}
 \omega\bar\omega=
 \left[1+\tan^2(aU)z\bar z\right]\dd z\dd\bar z
 +\tan(aU)z(\dd z)^2
 +\tan(aU)\bar z(\dd\bar z)^2.
 \label{eq:one-form-product}
\end{equation}
The correct second-profile metric is therefore
\begin{align}
 \dd s^2={}&2\dd U\dd V
 -2\left[\cos^2(aU)+z\bar z\sin^2(aU)\right]
 \dd z\dd\bar z\notag\\
 &-2\sin(aU)\cos(aU)
 \left[z(\dd z)^2+\bar z(\dd\bar z)^2\right].
 \label{eq:correct-second-metric}
\end{align}
This metric also has the correct exact field-equation content.
Using the curvature convention specified above
Eq.~\eqref{eq:first-exact-Ricci}, direct evaluation gives
\begin{equation}
 R_{UU}=2a^2,
 \qquad R_{\mu\nu}=0\quad\hbox{otherwise},
 \qquad R_{\rm sc}=0.
 \label{eq:second-exact-Ricci}
\end{equation}
Hence $G_{\mu\nu}=R_{\mu\nu}=2a^2\ell_\mu\ell_\nu$ in the post-front
region, where $\ell_\mu\dd X^\mu=\dd U$.  The corrected geometry is
therefore not vacuum: it solves Einstein's equation with the same aligned
null-radiation source as the first profile.  With the opposite Riemann-tensor
sign convention, including that used in the 1999 presentation, the displayed
Ricci component changes sign.  This convention difference does not affect
the scalar wave operator or any conclusion below.

Equation~\eqref{eq:correct-second-metric} exposes a parameter-counting
problem in the previously published form of this example.  Direct
substitution of Eq.~\eqref{eq:second-profile} produces neither an extra
$a^2$ multiplying $z\bar z\sin^2(aU)$ nor an extra $a$ multiplying the
cross term.  It also fixes the placement of $z$ with $(\dd z)^2$ and of
$\bar z$ with $(\dd\bar z)^2$, as well as the overall sign of these
terms.

To obtain the inverse metric, define
\begin{equation}
 \mathcal B(U,z,\bar z)=\cos^2(aU)-z\bar z\sin^2(aU).
 \label{eq:B-correct}
\end{equation}
The determinant of the transverse complex-coordinate block is
\begin{equation}
 \det(g_{A B})=-\mathcal B^2,
 \qquad A,B\in\{z,\bar z\},
 \label{eq:second-transverse-determinant}
\end{equation}
and, in a patch with $\mathcal B>0$,
\begin{equation}
 \sqrt{|g|}=\mathcal B.
 \label{eq:second-sqrt-determinant}
\end{equation}
The nonzero inverse transverse components are
\begin{align}
 g^{zz}&=\frac{2\sin(aU)\cos(aU)\bar z}{\mathcal B^2},
 \label{eq:second-inverse-zz}\\
 g^{\bar z\bar z}&=\frac{2\sin(aU)\cos(aU)z}{\mathcal B^2},
 \label{eq:second-inverse-bzbz}\\
 g^{z\bar z}=g^{\bar z z}
 &=-\frac{\cos^2(aU)+z\bar z\sin^2(aU)}{\mathcal B^2}.
 \label{eq:second-inverse-zbz}
\end{align}

Expanding the covariant scalar operator derived from these components
gives
\begin{equation}
 \Box=\mathcal L_0+a\mathcal L_1+O(a^2),
 \label{eq:second-operator-expansion}
\end{equation}
with
\begin{align}
 \mathcal L_0&=2\partial_U\partial_V
 -2\partial_z\partial_{\bar z},
 \label{eq:second-L0}\\
 \mathcal L_1&=2U\left(
 \bar z\,\partial_z^2+z\,\partial_{\bar z}^2
 \right).
 \label{eq:second-L1}
\end{align}
In particular, the first nontrivial correction is $O(a)$, not $O(a^2)$.
No first-derivative terms in $z$ or $\bar z$ occur at this order: the
coefficient of $\partial_z^2$ is proportional to $\bar z$, whose
$z$ derivative vanishes, and similarly for its conjugate.

Acting on the zeroth-order plane wave in Eq.~\eqref{eq:zeroth-mode}, the
source is linear in the transverse coordinates:
\begin{equation}
 \mathcal L_1\phi^{(0)}=
 -U\left[
 \bar z(k_1-\ii k_2)^2+z(k_1+\ii k_2)^2
 \right]\phi^{(0)}.
 \label{eq:second-profile-source}
\end{equation}
This is the correct starting point for the second-profile characteristic
equation.  It differs structurally from the previously published
$O(a^2)$ equation, so the associated polynomial solutions and their
Green-function singularities must be recalculated rather than amended by
isolated sign changes.

\subsection{Correct first-order characteristic equation}
\label{sec:second-characteristic-equation}

We again write
\begin{equation}
 \phi^{(1)}=\phi^{(0)}[g(U,z)+h(U,\bar z)].
 \label{eq:second-first-order-ansatz}
\end{equation}
Using Eqs.~\eqref{eq:second-L0}, \eqref{eq:second-L1}, and
\eqref{eq:second-profile-source}, the first-order mode equation becomes
\begin{align}
 &\ii R(\partial_Ug+\partial_Uh)
 -\frac{\ii q}{\sqrt2}\partial_zg
 -\frac{\ii\bar q}{\sqrt2}\partial_{\bar z}h\notag\\
 &\hspace{2.8cm}
 -\frac{U}{2}\left(q^2z+\bar q^2\bar z\right)=0.
 \label{eq:second-profile-master-PDE}
\end{align}
Unlike a source depending only on $U$, the two terms here split naturally
between the holomorphic and antiholomorphic equations:
\begin{align}
 \ii R\partial_Ug-\frac{\ii q}{\sqrt2}\partial_zg
 &=\frac{q^2}{2}Uz,
 \label{eq:second-g-PDE}\\
 \ii R\partial_Uh-\frac{\ii\bar q}{\sqrt2}
 \partial_{\bar z}h
 &=\frac{\bar q^2}{2}U\bar z.
 \label{eq:second-h-PDE}
\end{align}
The same characteristic invariants $\zeta$ and $\bar\zeta$ defined in
Eq.~\eqref{eq:characteristic-invariants} apply.  Along a $g$
characteristic,
\begin{equation}
 z(U)=\zeta-\frac{qU}{\sqrt2R},
 \label{eq:second-z-characteristic}
\end{equation}
and hence
\begin{equation}
 \frac{\dd g}{\dd U}
 =-\frac{\ii q^2}{2R}U
 \left(\zeta-\frac{qU}{\sqrt2R}\right).
 \label{eq:second-g-characteristic-ODE}
\end{equation}
Integration gives
\begin{align}
 g(U,z)={}&F(\zeta)
 -\frac{\ii q^2}{4R}\zeta U^2
 +\frac{\ii q^3}{6\sqrt2R^2}U^3
 \notag\\
 ={}&F(\zeta)-\frac{\ii q^2}{4R}U^2z
 -\frac{\ii q^3}{12\sqrt2R^2}U^3,
 \label{eq:second-g-general-solution}
\end{align}
and similarly
\begin{align}
 h(U,\bar z)={}&H(\bar\zeta)
 -\frac{\ii\bar q^2}{4R}U^2\bar z
 -\frac{\ii\bar q^3}{12\sqrt2R^2}U^3.
 \label{eq:second-h-general-solution}
\end{align}
Direct substitution verifies that Eqs.~\eqref{eq:second-g-general-solution}
and \eqref{eq:second-h-general-solution} solve the same master equation
\eqref{eq:second-profile-master-PDE}.

The matched particular solution is obtained by setting the nonconstant
homogeneous additions to zero:
\begin{align}
 \frac{\phi^{(1)}}{\phi^{(0)}}={}&
 -\frac{\ii U^2}{4R}\left(q^2z+\bar q^2\bar z\right)
 -\frac{\ii U^3}{12\sqrt2R^2}
 \left(q^3+\bar q^3\right).
 \label{eq:second-profile-matched-correction}
\end{align}
This correction and its first $U$ derivative vanish at the impulsive
surface.  It therefore matches continuously to the unperturbed incoming
mode without adding a free homogeneous function.

Most importantly, the particular solution contains no inverse powers of
$q$ or $\bar q$.  The transverse zero-mode poles responsible for the
additional logarithms in Sec.~\ref{sec:origin-logs} are absent.  Arbitrary functions
$F(\zeta)$ and $H(\bar\zeta)$ remain mathematically available, but they
represent extra spectral data and are fixed by the same matching and
Hadamard admissibility conditions as in the first profile.

\subsection{World function and Green function}
\label{sec:second-world-green}

The Green-function correction can be obtained without introducing
inverse transverse momentum operators.  To first order, the metric
\eqref{eq:correct-second-metric} is
\begin{equation}
 \dd s^2=\dd s_0^2+a h_{\mu\nu}\dd X^\mu\dd X^\nu+O(a^2),
 \label{eq:second-metric-linearized}
\end{equation}
where
\begin{equation}
 \dd s_0^2=2\dd U\dd V-2\dd z\dd\bar z,
 \qquad
 h_{\mu\nu}\dd X^\mu\dd X^\nu
 =-2U\left[z(\dd z)^2+\bar z(\dd\bar z)^2\right].
 \label{eq:second-h-line-element}
\end{equation}
This perturbation is transverse and trace-free with respect to the flat
metric.  More explicitly,
\begin{equation}
 h^{\mu}{}_{\mu}=0,
 \qquad \partial^\mu h_{\mu\nu}=0,
 \qquad \Box_0 h_{\mu\nu}=0,
 \qquad R_{\mu\nu}^{(1)}=0.
 \label{eq:second-linearized-Ricci}
\end{equation}
The last equality follows directly from the standard linearized Ricci
tensor and also holds distributionally when $U$ is read as $U_+$: the flat
d'Alembertian contains $2\partial_U\partial_V$, not
$\partial_U^2$, while the perturbation is independent of $V$.
For two points in the $U>0$ region, the first variation of Synge's world
function may be evaluated on the unperturbed straight geodesic
\begin{equation}
 X^\mu(\lambda)=X'^\mu+\lambda(X^\mu-X'^\mu),
 \qquad 0\leq\lambda\leq1,
 \label{eq:straight-geodesic}
\end{equation}
and is
\begin{equation}
 \sigma_1=\frac12\int_0^1
 h_{\mu\nu}[X(\lambda)]\Delta X^\mu\Delta X^\nu\,\dd\lambda.
 \label{eq:first-variation-sigma}
\end{equation}
Introduce
\begin{equation}
 \mathcal M(U,z;U',z')=
 \frac13(Uz+U'z')+\frac16(Uz'+U'z).
 \label{eq:M-second-profile}
\end{equation}
The elementary parameter integral gives
\begin{equation}
 \sigma_1=-\left[(z-z')^2\mathcal M
 +(\bar z-\bar z')^2\overline{\mathcal M}\right].
 \label{eq:sigma1-second-profile}
\end{equation}
Thus
\begin{equation}
 2\sigma=D_0-a\mathcal S_2+O(a^2),
 \label{eq:second-profile-world-function}
\end{equation}
with
\begin{equation}
 \mathcal S_2=2\left[(z-z')^2\mathcal M
 +(\bar z-\bar z')^2\overline{\mathcal M}\right].
 \label{eq:S2-second-profile}
\end{equation}
The determinant \eqref{eq:second-sqrt-determinant} has no term linear in
$a$.  Together with Eq.~\eqref{eq:second-linearized-Ricci}, the Van Vleck
transport equation then gives
\begin{equation}
 \Delta^{1/2}(X,X')=1+O(a^2).
 \label{eq:second-profile-VV-first-order}
\end{equation}
It follows immediately that the matched singular Green function is
\begin{align}
 G_F^{\rm sing}(X,X')
 =\mathcal N\left[
 \frac{1}{D_0-\ii0}
 +a\frac{\mathcal S_2}{(D_0-\ii0)^2}
 \right]+O(a^2).
 \label{eq:second-profile-GF-first-order}
\end{align}

Equation~\eqref{eq:second-profile-GF-first-order} has precisely the
standard first variation of a Hadamard pole.  There is no logarithmic
enhancement.  The same expression follows from inserting
Eq.~\eqref{eq:second-profile-matched-correction} into the spectral sum:
powers of $q$ and $\bar q$ act as transverse derivatives on the flat
propagator, while no inverse transverse Laplacian is generated.

\subsection{Smooth-front check for the second profile}
\label{sec:second-smooth-front}

The same mollified front used in Sec.~\ref{sec:smooth-front} gives an
independent check of this result.  Replace $U_+$ by $u_\epsilon(U)$ in
the exact second-profile metric:
\begin{align}
 \dd s_\epsilon^2={}&2\dd U\dd V
 -2\left[\cos^2(au_\epsilon)+z\bar z\sin^2(au_\epsilon)\right]
 \dd z\dd\bar z\notag\\
 &-2\sin(au_\epsilon)\cos(au_\epsilon)
 \left[z(\dd z)^2+\bar z(\dd\bar z)^2\right].
 \label{eq:second-smooth-metric}
\end{align}
For every $\epsilon>0$ this metric is smooth and nondegenerate on a patch
where its transverse determinant stays nonzero.  It agrees with flat space
for $U<-\epsilon$, with the original post-front metric for
$U>\epsilon$, and converges uniformly on compact sets at the level of
its coefficients.
The exact Ricci tensor of the regulated metric provides a further check.
Writing $u=u_\epsilon(U)$ and using
$\mathcal B_\epsilon=\cos^2(au)-z\bar z\sin^2(au)$, its only nonzero
component is
\begin{equation}
 R_{UU}^{(\epsilon)}=2a^2(u')^2
 +a\,\frac{(1+z\bar z)\sin(2au)}{\mathcal B_\epsilon}\,u'' ,
 \qquad R_{\rm sc}^{(\epsilon)}=0.
 \label{eq:second-smooth-exact-Ricci}
\end{equation}
For a standard mollified ramp, $(u')^2\to\Theta(U)$ and the second term,
which is confined to the transition strip and contains
$\sin(2au)=O(u)$, tends to zero distributionally.  Thus
\begin{equation}
 R_{UU}^{(\epsilon)}\longrightarrow2a^2\Theta(U),
 \label{eq:second-Ricci-distributional-limit}
\end{equation}
with no Ricci delta layer.  The delta-supported impulsive part is Weyl
curvature, consistently with the geometrical construction.

Its first-order scalar-operator expansion is
\begin{equation}
 \Box_\epsilon=\mathcal L_0+a\mathcal L_{1,\epsilon}+O(a^2),
 \qquad
 \mathcal L_{1,\epsilon}=2u_\epsilon(U)
 \left(\bar z\,\partial_z^2+z\,\partial_{\bar z}^2\right).
 \label{eq:second-smooth-operator}
\end{equation}
No derivative of $u_\epsilon$ occurs in the scalar operator at this order:
the perturbation is
purely transverse, its linearized determinant vanishes, and
$u_\epsilon$ depends only on $U$.  The trace, divergence, and
flat-wave-operator identities in Eq.~\eqref{eq:second-linearized-Ricci}
continue to hold after
$U\to u_\epsilon(U)$, and hence
$R_{\mu\nu}^{(1)}[u_\epsilon]=0$ as well.

Choose $U_0<-\epsilon$ and define
\begin{equation}
 J_{0,\epsilon}(U)=\int_{U_0}^{U}u_\epsilon(s)\,\dd s,
 \qquad
 J_{1,\epsilon}(U)=\int_{U_0}^{U}s\,u_\epsilon(s)\,\dd s.
 \label{eq:second-smooth-J}
\end{equation}
The solution matched to an unperturbed incoming mode is obtained by
integrating along the same flat characteristics as before:
\begin{align}
 g_\epsilon={}&-\frac{\ii q^2}{2R}\left[
 zJ_{0,\epsilon}
 +\frac{q}{\sqrt2R}\left(UJ_{0,\epsilon}-J_{1,\epsilon}\right)
 \right],
 \label{eq:second-smooth-g}\\
 h_\epsilon={}&-\frac{\ii \bar q^2}{2R}\left[
 \bar zJ_{0,\epsilon}
 +\frac{\bar q}{\sqrt2R}\left(UJ_{0,\epsilon}-J_{1,\epsilon}\right)
 \right].
 \label{eq:second-smooth-h}
\end{align}
There is no freely added characteristic function in these expressions.
Uniform convergence of $u_\epsilon$ gives, for $U>0$,
\begin{equation}
 J_{0,\epsilon}\longrightarrow\frac{U^2}{2},
 \qquad
 J_{1,\epsilon}\longrightarrow\frac{U^3}{3},
 \label{eq:second-smooth-J-limit}
\end{equation}
whereas both limits vanish for $U<0$.  Consequently,
$g_\epsilon+h_\epsilon$ tends exactly to
Eq.~\eqref{eq:second-profile-matched-correction}.  At every stage the
transverse spectral dependence is polynomial in $q$ and $\bar q$; front
smoothing therefore cannot generate the inverse transverse momentum
needed for a $K_0$ or $\ln\rho$ contribution.

The geometrical limit leads to the same conclusion.  Along the flat
segment $U_\lambda=U'+\lambda(U-U')$ and
$z_\lambda=z'+\lambda(z-z')$, define
\begin{equation}
 \mathcal M_\epsilon(X,X')=
 \int_0^1u_\epsilon(U_\lambda)z_\lambda\,\dd\lambda.
 \label{eq:second-smooth-M}
\end{equation}
The first variation of the world function is then
\begin{equation}
 \sigma_{1,\epsilon}=-\left[(z-z')^2\mathcal M_\epsilon
 +(\bar z-\bar z')^2\overline{\mathcal M}_\epsilon\right].
 \label{eq:second-smooth-sigma1}
\end{equation}
Dominated convergence gives
$\mathcal M_\epsilon\to\mathcal M_0$ on compact sets.  If both points
lie in the post-front region, $\mathcal M_0$ is precisely
Eq.~\eqref{eq:M-second-profile}; for a segment crossing the front it is
the same integral with $U_\lambda$ replaced by $(U_\lambda)_+$.  Since
the linearized Ricci tensor vanishes and the determinant has no linear term
in $a$, the Van Vleck transport equation has no first-order correction.  The
regulated singular kernel converges at first order to the corresponding
expansion of
$\mathcal N\Delta^{1/2}/(2\sigma-\ii0)$.  Thus both the mode and
geometrical constructions reproduce
Eq.~\eqref{eq:second-profile-GF-first-order} without an additional
logarithm.  The distributional and $R=0$ qualifications are exactly
those stated in Sec.~\ref{sec:smooth-front}.

This establishes the result for both profiles at first order.  Once the
metric and state prescription are fixed consistently, their perturbative
Green functions contain only the expansion of the geometrical Hadamard
singularity.  The additional logarithms arise only from independently
chosen, momentum-singular homogeneous data.

\section{Principal result, discussion, and conclusions}
\label{sec:principal-discussion}

\subsection{Principal result}
\label{sec:principal-result}

The calculations above can be summarized independently of the particular
separation used in either profile.

\begin{proposition}[Absence of an additional perturbative logarithm]
\label{prop:main}
Consider the massless scalar field on either of the two impulsive
plane-wave backgrounds studied above.  Work in a convex normal
neighbourhood not intersecting the impulsive surface, and choose the mode
solution by continuous matching to the flat incoming positive-frequency
modes.  If the two-point function is required to be Hadamard, then, to the
orders calculated here---second order in the first profile and first order
in the second---its singular part is exactly the perturbative expansion of
the geometrical Hadamard parametrix,
\begin{equation}
 G_F^{\rm sing}(X,X')
 =\mathcal N\frac{\Delta^{1/2}(X,X')}
 {2\sigma(X,X')-\ii0}.
 \label{eq:main-proposition}
\end{equation}
In particular, no additional logarithmic singularity is produced by the
wave geometry.  Terms of that kind obtained by solving the inhomogeneous
mode equations arise from momentum-singular homogeneous data and are
excluded by matching and Hadamard admissibility.
\end{proposition}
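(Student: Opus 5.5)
The proof assembles results already established for each profile and then applies Theorem~\ref{thm:general-diagnostic} to exclude anything else. The order of steps is: (i) fix the operator, (ii) fix the homogeneous spectral data by matching, (iii) identify the resulting kernel with the expansion of $\mathcal N\Delta^{1/2}/(2\sigma-\ii0)$, and (iv) use the diagnostic to show that any alternative splitting producing a logarithm must differ in non-Hadamard homogeneous data.

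For step (i), the operators are taken from the metrics themselves: Eq.~\eqref{eq:exact-box} with its expansion \eqref{eq:operator-expansion} for the first profile, and Eqs.~\eqref{eq:second-operator-expansion}--\eqref{eq:second-L1} for the corrected second metric. This removes any ambiguity of the form ``changed operator'' in the theorem.

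For step (ii), in the first profile the general solution \eqref{eq:sum-general} shows that the splitting function $J$ drops out. Exact-mode matching, Eq.~\eqref{eq:exact-matching-condition}, together with the continuity conditions \eqref{eq:field-continuity} and \eqref{eq:phase-derivative-at-impulse}, sets $F+H=0$. In the second profile the matched correction \eqref{eq:second-profile-matched-correction} and its $U$ derivative vanish at $U=0$, so continuous matching to the incoming mode \eqref{eq:incoming-flat-mode} again forces the nonconstant $F,H$ to vanish.

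Step (iii) is where the first profile is simplest. There the exact mode sum \eqref{eq:exact-mode-sum} is Gaussian in the transverse momenta and yields the closed form \eqref{eq:exact-local-Feynman} directly, with $V=0$ by \eqref{eq:tail-vanishes}. Expanding $\sigma$ and $\Delta^{1/2}$ via Eqs.~\eqref{eq:sigma-second-order} and \eqref{eq:sqrt-van-vleck-second-order} then reproduces \eqref{eq:GF-second-order}, which gives the statement through second order. For the second profile I would verify that inserting \eqref{eq:second-profile-matched-correction} into the spectral sum gives \eqref{eq:second-profile-GF-first-order}. Since the correction is polynomial in $q,\bar q$ and in $U$, each monomial acts as a transverse derivative (or $\partial_V^{-1}$-type factor from $1/R$) on the flat kernel. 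These must be shown to resum to $\mathcal S_2/(D_0-\ii0)^2$, matching the first-order world function \eqref{eq:sigma1-second-profile} with $\Delta^{1/2}=1+O(a^2)$ from \eqref{eq:second-profile-VV-first-order}.

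I expect this explicit spectral comparison to be the main obstacle: the $U^3 q^3/R^2$ term has to combine with the $U^2 q^2 z/R$ term into the symmetric function $\mathcal M$. My first attempt would sidestep it by arguing that both the matched-mode kernel and $\mathcal N\Delta^{1/2}/(2\sigma-\ii0)$ solve the same first-order inhomogeneous equation with the same delta source and boundary prescription. Their difference is then a homogeneous bisolution of $\mathcal L_0$ whose spectral data are polynomial in $\boldsymbol k$ and vanish at $U=0$, and hence it is zero.

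For step (iv), I would invoke Theorem~\ref{thm:general-diagnostic}. Any other inhomogeneous solution differs from the matched one only by $F(\zeta)+H(\bar\zeta)$. If such a term carries momentum-singular coefficients like $q^{-2}$ (Eq.~\eqref{eq:old-g1-decomposition}), the transform \eqref{eq:two-dimensional-transform}--\eqref{eq:K0-small-argument} produces a $\ln\rho$ term. That term is nonsmooth, so by the theorem it is excluded by the Hadamard requirement; it is also excluded by matching, since it violates \eqref{eq:exact-matching-condition}. Finally, the absence of a geometric logarithm follows because the log coefficient is proportional to $(\xi-1/6)R_{\rm sc}$, which vanishes, and the tail vanishes.
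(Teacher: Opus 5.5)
Your architecture matches the paper's proof. You read the operator off the metric, fix the homogeneous data by continuous matching, identify the result with the expansion of $\mathcal N\Delta^{1/2}/(2\sigma-\ii0)$, and exclude $q^{-2}$-type data via the $K_0$ logarithm and the smooth-difference theorem. For the first profile, using the exact Gaussian sum \eqref{eq:exact-local-Feynman} is a valid and slightly stronger variant of the paper's first-order spectral check plus second-order geometric expansion.

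\textbf{The gap} is the uniqueness sidestep you propose for the second profile; as written it does not close.
\begin{itemize}
\item Sharing the delta source and the local $-\ii0$ prescription only makes the difference between the matched kernel and the parametrix a smooth bisolution, which is the content of Theorem~\ref{thm:general-diagnostic}. To conclude that it vanishes, you assert that its spectral data are polynomial in $\boldsymbol k$. But the spectral content of $\mathcal N\mathcal S_2/(D_0-\ii0)^2$ is exactly what the explicit comparison would establish, so this step is circular.
\item Even granting it, vanishing at $U=0$ does not force a solution of $\mathcal L_0$ to vanish. A counterexample is $f(U)\,\mathrm{Re}\,F(z)$ with $f(0)=0$ and $F$ holomorphic. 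It lies in the characteristic $R=0$ sector, which the paper stresses is not fixed by junction data.
\item Characteristic uniqueness needs data on the whole plane $U=0$, since the causal past of a point with $U>0$ meets that plane in a noncompact region. The parametrix, however, is only a normal-neighbourhood object, and the proposition's neighbourhood avoids $U=0$.
\item The first-order part of the matched kernel vanishes only when both $U=U'=0$, not at $U=0$ alone.
\end{itemize}
The argument can be repaired. The first-order parametrix is globally defined, its Wightman form is a one-sided boundary value in $\Delta V$ with no $R=0$ atom, and $\mathcal M=0$ at $U=U'=0$. One then applies $R>0$ uniqueness in each argument in turn. However, the computation you called the main obstacle is shorter than this repair.

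\textbf{The direct computation.} Set $\Delta U=U-U'$ and $\Delta z=z-z'$. Since $\boldsymbol k\cdot\Delta\boldsymbol x=(\bar q\,\Delta z+q\,\Delta\bar z)/\sqrt2$, an inserted $q$ acts as $-\ii\sqrt2\,\partial_{\Delta\bar z}$ on the transverse Gaussian $(2\pi\ii R/\Delta U)\exp(-\ii R\,\Delta z\,\Delta\bar z/\Delta U)$. Hence $q^n\to(-\sqrt2\,R\,\Delta z/\Delta U)^n$ exactly, with no lower-order terms. Write $c=\phi^{(1)}/\phi^{(0)}$ from \eqref{eq:second-profile-matched-correction}. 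Under the transverse integral, the holomorphic part of $c(X;\boldsymbol k)+\overline{c(X';\boldsymbol k)}$ becomes
\begin{equation*}
-\frac{\ii R\,\Delta z^2}{(\Delta U)^2}\left[\frac12\left(U^2z-U'^2z'\right)-\frac16\left(U^2+UU'+U'^2\right)\Delta z\right]=-\frac{\ii R\,\Delta z^2}{\Delta U}\,\mathcal M .
\end{equation*}
This uses $(U^3-U'^3)/\Delta U=U^2+UU'+U'^2$, and it is precisely how the $U^3q^3/R^2$ term assembles the symmetric function $\mathcal M$. Adding the conjugate piece gives $-\ii R\,\mathcal S_2/(2\Delta U)$. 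Then $\int_0^\infty R\,e^{\ii Rw}\,\dd R=-(w+\ii0)^{-2}$ with $w=D_0/(2\Delta U)$ yields exactly $\mathcal N\mathcal S_2/D_0^2$, which is Eq.~\eqref{eq:second-profile-GF-first-order}. With this computation in place of the sidestep, your proof is complete and agrees with the paper's.
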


\begin{proof}
For the first profile, the characteristic equations
\eqref{eq:g-general}--\eqref{eq:h-general} show that every perturbative
solution is the sum of a fixed particular solution and freely specifiable
functions of the characteristic invariants.  Exact-mode matching fixes the
nonconstant part of their sum.  Substitution of the matched solution into
the spectral representation gives Eq.~\eqref{eq:GF-first-order}, while the
direct expansions of the world function and Van Vleck determinant give the
same result through second order in Eq.~\eqref{eq:GF-second-order}.

For the second profile, the corrected metric has a perturbation linear in
$a$.  Its characteristic solution
\eqref{eq:second-profile-matched-correction} contains only nonnegative
powers of the transverse momenta.  In the spectral sum these become local
transverse derivatives of the flat propagator.  Equivalently, the first
variation of Synge's function gives
Eq.~\eqref{eq:second-profile-GF-first-order}.  Both derivations therefore
produce only derivatives of the ordinary Hadamard pole.

The remaining freedom in either calculation is a solution of the
homogeneous field equation.  A momentum-space choice containing inverse
powers of $q\bar q$ Fourier transforms into the transverse logarithms
identified in Sec.~\ref{sec:origin-logs}.  It is not fixed by the source,
does not follow from the geometry, and changes the singular part of the
two-point function.  Since two Hadamard two-point functions may differ only
by a smooth bisolution, such a choice is inadmissible for the matched
Hadamard state.  This proves Eq.~\eqref{eq:main-proposition} to the stated
orders.
\end{proof}

The proposition is deliberately local and perturbative.  Statements
directly on $U=0$ require a prescription for products of distributions.
The smooth-front families constructed above provide independent tests of
that limit; they are not needed for the off-surface conclusion proved here.

\subsection{Discussion and implications}
\label{sec:discussion}

The calculation separates an ambiguity of representation from an ambiguity
of state.  Splitting one inhomogeneous
equation into holomorphic and antiholomorphic parts is only a bookkeeping
choice.  If the corresponding homogeneous functions are transformed at the
same time so that $g+h$ is unchanged, the mode and its Green function are
unchanged.  A different singularity appears only when the homogeneous
completion is selected independently, in which case the incoming data have
also been changed.  This is the concrete realization of
Theorem~\ref{thm:general-diagnostic}.

A logarithm produced after dimensional regularization or introduction of a
Schwinger parameter need not be geometrical.  In the present calculation the
regularization evaluates a
Fourier transform already made singular by an inverse power of $q\bar q$; it
does not create that inverse power.  Local curvature information enters
instead through Synge's world function and the Van Vleck determinant.  Their
independent expansion reproduces the matched spectral result and provides a
state-independent diagnostic of its singular part.

In the first example, the exact solution fixes the perturbative branch and
shows how matching removes the apparently arbitrary characteristic
functions.  In the second, the metric specialization must first be corrected.  Its
transverse cross term and leading scalar operator are linear in $a$.  Once
this order counting is restored, the source contains positive
transverse-momentum powers and the earlier logarithmic mechanism is absent.
The mollified second-profile calculation reaches the same matched mode and
world-function correction, so this conclusion is not an artifact of having
expanded directly at a discontinuous front.

The vanishing of the renormalized stress tensor agrees with the standard
plane-wave result.  Here the admissibility condition is first imposed on the
two-point function, after which we subtract its
universal geometrical singularity, and only then evaluate the smooth
state-dependent remainder.  This ordering matters because the Hadamard
condition is required for the local definition of composite observables, not
merely for the numerical value of one particular observable.

The calculation has two limitations.  Our detailed perturbative checks
extend through second order for the first profile and first order for the
corrected second profile.  Section~\ref{sec:smooth-front} proves smeared
convergence for the first canonical smooth-front family, and
Sec.~\ref{sec:second-smooth-front} supplies the corresponding first-order
check for the second profile, in each case for the $R\neq0$ sector on compact
normal neighbourhoods.  These results do not define composite observables with both
arguments constrained exactly to the null surface, nor do they fix the
characteristic $R=0$ infrared sector.  Those questions require an additional
distributional prescription \cite{Steinbauer1998,Samann2024} and are separate
from the local Hadamard conclusion established here.

\subsection{Conclusions}
\label{sec:conclusions}

For a fixed normally hyperbolic operator, altering the split of an
inhomogeneous source changes a two-point function only through homogeneous
data.  A nonsmooth change cannot describe the same matched Hadamard state.

Applications to two impulsive plane-wave profiles demonstrate the criterion.
For the first profile, the corrected exact mode, characteristic solution, and
world-function calculation agree.  For the second, direct substitution into
the generating metric changes both its coefficients and its leading
perturbative order.  Neither corrected problem produces an additional
logarithmic short-distance singularity.  The logarithms instead encode
momentum-singular homogeneous data, while the admissible matched propagator
has the standard geometrical Hadamard form.  For the exactly soluble first
profile, the standard flat-normalized prescription also gives a vanishing
renormalized stress tensor away from the impulse.  The smooth-front family confirms that
the impulsive limit preserves this singularity class and does not generate
the inverse transverse-momentum factors needed for the spurious logarithm.

The same test applies whenever a perturbative mode equation has nonunique
particular solutions: verify the operator, retain the homogeneous completion,
impose common matching data, and compare the result with the Hadamard
parametrix.  The two plane-wave calculations give explicit examples of this
procedure.

\section*{Acknowledgments}
This paper is dedicated to the memory of Jan K\v{r}\'{\i}\v{z}, a dear friend and colleague who left us far too soon.


\section*{Competing interests}
The author declares no competing interests.

\section*{Data availability statement}
\label{sec:data-availability}
No new data were created or analysed in this study.

\appendix
\section{Audit of the 1999 calculation}
\label{sec:comparison}

Table~\ref{tab:first-profile-comparison} records the audit of the first
profile.  The purpose is not merely to list typographical differences.
It separates three logically distinct ingredients which were partly
conflated in the earlier calculation: the differential operator fixed by
the metric, a particular solution of the perturbative mode equation, and
the homogeneous data which specify the state.  Equation numbers in the
``1999'' column refer to the published article.

\begin{table*}[t]
\caption{First profile: comparison of the published and corrected calculations.}
\label{tab:first-profile-comparison}
\begingroup
\small
\sloppy
\renewcommand{\arraystretch}{1.18}
\begin{tblr}{width=\textwidth, colspec={Q[l,wd=2.1cm] Q[l,wd=2.7cm] X[l] X[l]}, cells={valign=t}, row{1}={font=\bfseries}, rowsep=2pt}
\toprule
1999 & Published step & Corrected result & Consequence \\
\midrule
Eq.~(4) & The exact d'Alembertian is written in real transverse coordinates. & Transforming consistently with $z=(x+\ii y)/\sqrt2$ gives the operator used in Eq.~\eqref{eq:exact-box}. & This fixes all numerical factors in the separated phase and supplies a direct check on the perturbative operator. \\
Eq.~(6) & The two transverse phase terms appear without the factor implied by the exact transverse integrals. & The exact mode is Eq.~\eqref{eq:exact-mode}; its phase denominators contain the required factor $1/4$. & Expansion of the corrected exact mode now agrees with the solution of the corrected first-order equation. \\
Eqs.~(19)--(21) & The inhomogeneous term is proportional to $(k_1^2+k_2^2)U$ and is divided between the holomorphic and antiholomorphic equations. & Direct action of $L_1=2U(\partial_x^2-\partial_y^2)$ gives $(k_1^2-k_2^2)U$. The resulting master equation is Eq.~\eqref{eq:correct-first-order-PDE}. & The source has spin-two transverse structure; replacing the difference by a sum changes the field equation, not merely its parametrization. \\
Eqs.~(22), (23) & A chosen split produces inverse powers of $k_1\mathbin{\pm}\ii k_2$. & The characteristic solution Eqs.~\eqref{eq:g-general}--\eqref{eq:h-general} contains arbitrary homogeneous functions; exact matching fixes their nonconstant sum. & The apparently different ``groupings'' are particular solutions supplemented by different free spectral data. \\
Eqs.~(28)--(31) & Momentum denominators are Schwinger-regularized and a Hankel function is interpreted as an extra logarithmic short-distance singularity. & The relevant two-dimensional transform is Eq.~\eqref{eq:two-dimensional-transform}; its $K_0$ logarithm originates from an inverse transverse momentum factor in the homogeneous data. & The logarithm is not generated by the local geometry. It changes the Hadamard singular part and is inadmissible for the matched state. \\
Eq.~(39) & A less singular split is said to recover the same coincidence degree as the exact answer. & The complete matched result is the expansion of $\Delta^{1/2}/(2\sigma-\ii0)$, explicitly Eqs.~\eqref{eq:GF-first-order} and \eqref{eq:GF-second-order}. & Agreement is strengthened from power counting at coincidence to equality with the geometrical Hadamard coefficients through the calculated orders. \\
\bottomrule
\end{tblr}
\endgroup
\end{table*}

The second-profile discrepancies begin one step earlier, at the metric
itself, and consequently propagate through the remainder of that
calculation.  They are summarized in
Table~\ref{tab:second-profile-comparison}.

\begin{table*}[t]
\caption{Second profile and final physical interpretation.}
\label{tab:second-profile-comparison}
\begingroup
\small
\sloppy
\renewcommand{\arraystretch}{1.18}
\begin{tblr}{width=\textwidth, colspec={Q[l,wd=2.1cm] Q[l,wd=2.7cm] X[l] X[l]}, cells={valign=t}, row{1}={font=\bfseries}, rowsep=2pt}
\toprule
1999 & Published step & Corrected result & Consequence \\
\midrule
Eq.~(40) & Substitution of $G(z)=az^2/2$ gives an expression with misplaced $z,\bar z$, an incorrect sign and coefficient, and extra powers of $a$. & Direct substitution into Hogan's metric gives Eq.~\eqref{eq:correct-second-metric}, with $B=\cos^2(aU)-z\bar z\sin^2(aU)$. & The published second metric is not the stated specialization of the generating metric. \\
Eq.~(41) & The wave operator is calculated from the published form of Eq.~(40). & Inverting the corrected transverse block and including its determinant gives Eq.~\eqref{eq:second-operator-expansion}. & Every subsequent source term must be recomputed; the old perturbative equations cannot be repaired by changing an integration constant. \\
Eq.~(42) & The first nontrivial correction is taken to be $O(a^2)$. & The off-diagonal transverse metric perturbation is already linear in $a$: $L_1=2U(\bar z\partial_z^2+z\partial_{\bar z}^2)$, as in Eq.~\eqref{eq:second-L1}. & The perturbative order and tensor structure of the old calculation are both incorrect. \\
Eqs.~(44)--(53) & Alternative splittings yield expressions with or without inverse transverse momenta and hence with or without logarithms. & The correct source is $L_1\phi^{(0)}=-U(\bar z\bar q^2+zq^2)\phi^{(0)}$; its matched solution is Eq.~\eqref{eq:second-profile-matched-correction}. & It contains no inverse $q$ or $\bar q$ poles. Its Green function, Eq.~\eqref{eq:second-profile-GF-first-order}, has only the first variation of the Hadamard pole. \\
Sec.~4 & The extra terms are called spurious because they leave no finite remainder after differentiating and taking coincidence; the vanishing of $\langle T_{\mu\nu}\rangle_{\rm ren}$ is retained. & The stronger reason is that momentum-singular additions fail the Hadamard-state comparison test. For the standard flat-normalized in-state, the exact plane-wave propagator gives Eq.~\eqref{eq:Tmunu-zero}. & The earlier physical conclusion is preserved, but it is now obtained after excluding the inadmissible two-point functions rather than by cancelling their effects only at the final observable. \\
\bottomrule
\end{tblr}
\endgroup
\end{table*}
\clearpage

This comparison changes the interpretation of the old ambiguity.  The
freedom to redistribute a common source between $g$ and $h$ is harmless
when the same homogeneous data are retained, because only $g+h$ enters the
mode.  A change of singularity structure occurs only when the
redistribution is accompanied by a change of homogeneous spectral data.
Thus the central intuition of the earlier work---that the additional
logarithms are spurious---survives, while their mathematical origin is
identified more precisely.



\begin{thebibliography}{99}
\sloppy

\bibitem{Fulling1978}
S.~A.~Fulling, M.~Sweeny, and R.~M.~Wald,
``Singularity structure of the two-point function in quantum field theory
in curved spacetime,'' Commun. Math. Phys. \textbf{63}, 257--264 (1978).
\href{https://doi.org/10.1007/BF01196934}{doi:10.1007/BF01196934}.

\bibitem{Radzikowski1996}
M.~J.~Radzikowski,
``Micro-local approach to the Hadamard condition in quantum field theory on
curved space-time,'' Commun. Math. Phys. \textbf{179}, 529--553 (1996).
\href{https://doi.org/10.1007/BF02100096}{doi:10.1007/BF02100096}.

\bibitem{KayWald1991}
B.~S.~Kay and R.~M.~Wald,
``Theorems on the uniqueness and thermal properties of stationary,
nonsingular, quasifree states on spacetimes with a bifurcate Killing
horizon,'' Phys. Rep. \textbf{207}, 49--136 (1991).
\href{https://doi.org/10.1016/0370-1573(91)90015-E}
{doi:10.1016/0370-1573(91)90015-E}.

\bibitem{Brunetti1996}
R.~Brunetti, K.~Fredenhagen, and M.~K\"ohler,
``The microlocal spectrum condition and Wick polynomials of free fields on
curved spacetimes,'' Commun. Math. Phys. \textbf{180}, 633--652 (1996).
\href{https://doi.org/10.1007/BF02099626}{doi:10.1007/BF02099626}.

\bibitem{HollandsWald2001}
S.~Hollands and R.~M.~Wald,
``Local Wick polynomials and time ordered products of quantum fields in
curved spacetime,'' Commun. Math. Phys. \textbf{223}, 289--326 (2001).
\href{https://doi.org/10.1007/s002200100540}
{doi:10.1007/s002200100540}.

\bibitem{DeWittBrehme1960}
B.~S.~DeWitt and R.~W.~Brehme,
``Radiation damping in a gravitational field,'' Ann. Phys. \textbf{9},
220--259 (1960).
\href{https://doi.org/10.1016/0003-4916(60)90030-0}
{doi:10.1016/0003-4916(60)90030-0}.

\bibitem{Christensen1976}
S.~M.~Christensen,
``Vacuum expectation value of the stress tensor in an arbitrary curved
background: The covariant point-separation method,'' Phys. Rev. D
\textbf{14}, 2490--2501 (1976).
\href{https://doi.org/10.1103/PhysRevD.14.2490}
{doi:10.1103/PhysRevD.14.2490}.

\bibitem{DecaniniFolacci2006}
Y.~D\'ecanini and A.~Folacci,
``Hadamard renormalization of the stress-energy tensor for a quantized
scalar field in a general spacetime of arbitrary dimension,'' Phys. Rev. D
\textbf{73}, 044027 (2006).
\href{https://doi.org/10.1103/PhysRevD.73.044027}
{doi:10.1103/PhysRevD.73.044027}.

\bibitem{BirrellDavies1982}
N.~D.~Birrell and P.~C.~W.~Davies,
\emph{Quantum Fields in Curved Space} (Cambridge University Press,
Cambridge, 1982).
\href{https://doi.org/10.1017/CBO9780511622632}
{doi:10.1017/CBO9780511622632}.

\bibitem{WaldBook1994}
R.~M.~Wald,
\emph{Quantum Field Theory in Curved Spacetime and Black Hole
Thermodynamics} (University of Chicago Press, Chicago, 1994).
\href{https://doi.org/10.7208/chicago/9780226870274.001.0001}
{\nolinkurl{doi:10.7208/chicago/9780226870274.001.0001}}.

\bibitem{PoissonPoundVega2011}
E.~Poisson, A.~Pound, and I.~Vega,
``The motion of point particles in curved spacetime,'' Living Rev. Relativ.
\textbf{14}, 7 (2011).
\href{https://doi.org/10.12942/lrr-2011-7}{doi:10.12942/lrr-2011-7}.

\bibitem{Gibbons1975}
G.~W.~Gibbons,
``Quantized fields propagating in plane-wave spacetimes,''
Commun. Math. Phys. \textbf{45}, 191--202 (1975).
\href{https://doi.org/10.1007/BF01629249}{doi:10.1007/BF01629249}.

\bibitem{Deser1975}
S.~Deser,
``Plane waves do not polarize the vacuum,'' Phys. Rev. Lett. \textbf{35},
1385--1387 (1975).
\href{https://doi.org/10.1103/PhysRevLett.35.1385}
{doi:10.1103/PhysRevLett.35.1385}.

\bibitem{Chrzanowski1976}
P.~L.~Chrzanowski, R.~A.~Matzner, V.~D.~Sandberg, and M.~P.~Ryan, Jr.,
``Zero-mass plane waves in nonzero gravitational backgrounds,''
Phys. Rev. D \textbf{14}, 317--326 (1976).
\href{https://doi.org/10.1103/PhysRevD.14.317}
{doi:10.1103/PhysRevD.14.317}.

\bibitem{Harte2013}
A.~I.~Harte,
``Tails of plane wave spacetimes: Wave-wave scattering in general
relativity,'' Phys. Rev. D \textbf{88}, 084059 (2013).
\href{https://doi.org/10.1103/PhysRevD.88.084059}
{doi:10.1103/PhysRevD.88.084059}.

\bibitem{Cho2023}
H.-T.~Cho,
``Wightman function and stochastic gravity noise kernel in impulsive
plane wave spacetimes,'' Phys. Rev. D \textbf{108}, 105007 (2023).
\href{https://doi.org/10.1103/PhysRevD.108.105007}
{doi:10.1103/PhysRevD.108.105007}.

\bibitem{Wald1977}
R.~M.~Wald,
``The back reaction effect in particle creation in curved spacetime,''
Commun. Math. Phys. \textbf{54}, 1--19 (1977).
\href{https://doi.org/10.1007/BF01609833}{doi:10.1007/BF01609833}.

\bibitem{Steinbauer1998}
R.~Steinbauer,
``Geodesics and geodesic deviation for impulsive gravitational waves,''
J. Math. Phys. \textbf{39}, 2201--2212 (1998).
\href{https://doi.org/10.1063/1.532283}{doi:10.1063/1.532283}.

\bibitem{KunzingerSteinbauer1999}
M.~Kunzinger and R.~Steinbauer,
``A rigorous solution concept for geodesic and geodesic deviation
equations in impulsive gravitational waves,'' J. Math. Phys. \textbf{40},
1479--1489 (1999).
\href{https://doi.org/10.1063/1.532809}{doi:10.1063/1.532809}.

\bibitem{SteinbauerVickers2006}
R.~Steinbauer and J.~A.~Vickers,
``The use of generalised functions and distributions in general
relativity,'' Class. Quantum Grav. \textbf{23}, R91--R114 (2006).
\href{https://doi.org/10.1088/0264-9381/23/10/R01}
{doi:10.1088/0264-9381/23/10/R01}.

\bibitem{PodolskyGriffiths1999}
J.~Podolsk\'y and J.~B.~Griffiths,
``Non-expanding impulsive waves in spacetimes of constant curvature,''
Class. Quantum Grav. \textbf{16}, 2937--2946 (1999).
\href{https://doi.org/10.1088/0264-9381/16/9/311}
{doi:10.1088/0264-9381/16/9/311}.

\bibitem{PodolskySteinbauer2022}
J.~Podolsk\'y and R.~Steinbauer,
``Penrose junction conditions with $\Lambda$: Geometric insights into
low-regularity metrics for impulsive gravitational waves,'' Gen. Relativ.
Gravit. \textbf{54}, 96 (2022).
\href{https://doi.org/10.1007/s10714-022-02977-6}
{doi:10.1007/s10714-022-02977-6}.

\bibitem{Samann2024}
C.~S\"amann, B.~Schinnerl, R.~Steinbauer, and R.~\v{S}varc,
``Cut-and-paste for impulsive gravitational waves with $\Lambda$:
the mathematical analysis,'' Lett. Math. Phys. \textbf{114}, 58 (2024).
\href{https://doi.org/10.1007/s11005-024-01804-0}
{doi:10.1007/s11005-024-01804-0}.

\bibitem{HortacsuLutfuoglu1999}
M.~Horta\c{c}su and B.~C.~L\"utf\"uo\u{g}lu,
``Spurious effects in perturbative calculations,''
Gen. Relativ. Gravit. \textbf{31}, 391--403 (1999),
\href{https://arxiv.org/abs/gr-qc/9812037}{arXiv:gr-qc/9812037}.

\bibitem{Hogan1999}
P.~A.~Hogan,
``Plane gravitational waves and Bertotti--Robinson space-times,''
Math. Proc. R. Ir. Acad. A \textbf{99}, 51--55 (1999).

\fussy
\end{thebibliography}
\end{document}